\documentclass[letterpaper,11pt,twoside,keywordsasfootnote,addressatend,noinfoline]{article}

\usepackage{imsart}
\usepackage{graphicx}
\usepackage{subfigure}
\usepackage{color}
\usepackage{tikz}
\usepackage{newlfont}
\usepackage{multirow}
\usepackage{array}
\usepackage{longtable} 
\setlength{\LTcapwidth}{7in} 
\usepackage{ifthen}
\usepackage{alltt}
\usepackage{float}
\usepackage{enumerate}
 \usepackage[text={6.5in,8.5in},centering]{geometry} 
\newcolumntype{C}[1]{>{\centering\let\newline\\\arraybackslash\hspace{0pt}}m{#1}}
\newcolumntype{L}[1]{>{\raggedright\let\newline\\\arraybackslash\hspace{0pt}}m{#1}}
\newcolumntype{R}[1]{>{\raggedleft\let\newline\\\arraybackslash\hspace{0pt}}m{#1}}
\newcommand{\ba}{\begin{array} }
\newcommand{\ea}{\end{array} }
\newcommand{\bae}{\begin{eqnarray}}
\newcommand{\eae}{\end{eqnarray}}
\newcommand{\bea}{\begin{eqnarray*}}
\newcommand{\eea}{\end{eqnarray*}}
\newcommand{\be}{\begin{equation}}
\newcommand{\ee}{\end{equation}}

\newcommand{\pr}{{\bf Proof}~~}

\def\to{{\rightarrow}}

\def\to{{\rightarrow}}

\usepackage{amsfonts}
\usepackage{amssymb}
\usepackage{amsthm}
\usepackage{ mathrsfs }
 \usepackage{ae} 
\usepackage[T1]{fontenc}
\usepackage[ansinew]{inputenc}
\usepackage{amsmath}

\usepackage[english]{babel}

\usepackage{graphicx}
\usepackage{color}
\usepackage[colorlinks]{hyperref}
\usepackage{lscape}
\usepackage{graphicx}
\usepackage{epstopdf}
\DeclareGraphicsRule{.tif}{png}{.png}{`convert #1 `basename #1 .tif`.png}

\newtheorem{theorem}{\hskip\parindent\bf Theorem}[section]

\usepackage{multicol}
\usepackage{tabularx}
\usepackage{ctable}
\usepackage{booktabs}
\usepackage{bm}

\begin{document}

\begin{frontmatter}

\title  {$SIS$ and $SIR$ epidemic models under virtual dispersal}
\runtitle  {$SIS$ and $SIR$ epidemic models under virtual dispersal}
\author    {Derdei Bichara, Yun Kang, Carlos Castillo-Chavez, Richard Horan, Charles Perrings}
\runauthor {Derdei Bichara, Yun Kang, Carlos Castillo-Chavez, Richard Horan and Charles Perrings}

\address   {SAL Mathematical, Computational and Modeling Science Center, \\Arizona State University, Tempe, AZ 85287.\\ E-mail:  derdei.bichara@asu.edu}
\address   {Sciences and Mathematics Faculty,\\ College of Letters and Sciences,\\ Arizona State University, Mesa, AZ 85212. \\Email: yun.kang@asu.edu}
\address   {SAL Mathematical, Computational and Modeling Science Center, \\Arizona State University, Tempe, AZ 85287.\\ E-mail:  ccchavez@asu.edu}
\address   {Department of Agricultural, \\Food and Resource Economics, \\Michigan State University, East Lansing, MI 48824.\\ E-mail: horan@msu.edu}
\address   {School of Life Sciences, Arizona State University, Tempe, AZ 85287.\\E-mail: Charles.Perrings@asu.edu}
{\center\begin{abstract} \\
In this paper, we develop a multi-group epidemic framework via virtual dispersal where the risk of infection is  a function of the residence time and local environmental risk. This novel approach eliminates the need to define and measure contact rates that are used in the traditional multi-group epidemic models with heterogeneous mixing. We apply this approach to a general $n$-patch SIS model whose basic reproduction number $\mathcal R_0 $ is computed as a function of a patch residence-times matrix $\mathbb P$. Our analysis implies that the resulting $n$-patch SIS model has robust dynamics when patches are strongly connected: there is a globally stable endemic equilibrium when $\mathcal R_0>1 $ while the disease free equilibrium is globally stable when $\mathcal R_0\leq1 $. Our further analysis indicates that the dispersal behavior described by the residence-times matrix $\mathbb P$ has profound effects on the disease dynamics at the single patch level with consequences that proper dispersal behavior along with the local environmental risk can either promote or eliminate the endemic in particular patches.  Our work highlights the impact of residence times matrix if the patches are not strongly connected. Our framework can be generalized in other endemic and disease outbreak models. As an illustration, we apply our framework to a two-patch SIR single outbreak epidemic model where the process of disease invasion is connected to the final epidemic size relationship. We also explore the impact of disease prevalence driven decision using a phenomenological modeling approach in order to contrast the role of constant versus state dependent $\mathbb P$ on disease dynamics.
\end{abstract}}

\begin{keyword}[class=AMS]
\kwd[Primary ]{34D23, 92D25, 60K35}
\end{keyword}

\begin{keyword}
\kwd{Epidemiology; SIS-SIR Models; Dispersal; Residence Times; Global Stability; Adaptive Behavior; Final Size Relationship.}
\end{keyword}

\end{frontmatter}

\section{Introduction}\label{intro}

Sir Ronald Ross must be considered the founder of mathematical epidemiology \cite{Ross1911} despite the fact that Daniel Bernouilli (1700-1782), was most likely the first researcher to introduce the use of mathematical models in the study of epidemic outbreaks \cite{Bernoulli1766,1019.92028} nearly 150 years earlier.  Ross' appendix to his 1911 paper \cite{Ross1911} not only introduces a nonlinear system of differential equations aimed at capturing the overall dynamics of malaria contagion, a disease driven by the interactions of hosts, vectors and the life-history of \textit{Plasmodium falciparum}, but also includes a tribute to  mathematics through his observation that this framework, his model, may also be used to model the dynamics of sexually transmitted diseases  \cite{Ross1911}. Ross' observation has motivated the use of mathematics in the study of the impact of human social interaction on disease dynamics \cite{BlytheCCC89,CCCBusen91,MR1938908,MR2000i:92036,MR96k:92016,HadCasti95,HethYor84,HsuSchmitz2000b,HsuSchmitz2000,HsuSchmitz2007,Yorke:1978xu}.\\

In fact, Ross' work introduced the type of frameworks needed to capture and modify the dynamics of epidemic outbreaks;  new landscapes where public policies could be tried and tested without harming anybody, complementing and expanding the role that statistics plays in epidemiology.  Suddenly scientists and public health experts had a ``laboratory"  for assessing the impact of transmission mechanisms; evaluating, a priori, efforts aimed at mitigating or eliminating the deleterious impact of disease dynamics. \\

The study of the dynamics of communicable disease in metapopulation, multi-group or age-structure  models has also benefitted from the work of Ross. Contact matrices have been used in the study of  disease dynamics to accommodate or capture the dynamics of  heterogeneous mixing populations \cite{Anderson:1982xw,CCCCookeHuangLevin89,MR87h:92054,HethcoteSIAM2000}. The spread of communicable diseases like measles, chicken pox or rubella is  intimately connected to the the concept of contact, ``effective" contact or ``effective" \textit{per capita} contact rate \cite{CCCVelasco1994,HethcoteSIAM2000}; a clear measurable concept in, for example, the context of sexually transmitted diseases (STDs) or vector-borne diseases. The values used to define a contact matrix emerge from the \textit {a priori} belief that contacts can be clearly defined and measured in any context. Their use in the context of communicable diseases is based often on relative rankings; the result of observational subjective measures of contact or activity levels. For example, since  children are believed to have the most contacts per unit of time, their observed activity levels are routinely used to set a relative contact or activity scale. Traditionally, since school children are assumed to be the most active, they are used to set the scale with the rest of the age-specific contact matrix usually completed under the assumption of proportionate (weighted random) mixing (albeit other forms of mixing are possible \cite{AndMay91,BlytheCCC89, CCCBusen91, CCCHethcoteAndreasenLevin89, MR87h:92054,HethcoteSIAM2000} and references therein).  In short, mixing or contact matrices are used to collect re-scaled estimated levels of activity among interacting subgroups or age-classes; a phenomenological estimation process based on observational studies, surveys, and various appealing definitions of contact \cite{MossongETal2008}. 
Our belief that contact rates cannot, in general, be measured in satisfactory ways for diseases like influenza, measles or tuberculosis, arises from the difficulty of assessing the average number of contacts per unit of time of children in a school bus, or the average number of contacts per unit of time that children and adults have with each other in a classroom or at the library, per unit of time. In  some cities in Latin America, some individuals spend 2-4 hours per day as users of mass transportation systems, some traveling in packed subway cars or as ``sardines'' in small buses. How packed these modes of transportation are as a function of time of the day or day of the week can be observed but has not been uniformly quantified in terms of contacts (or age-specific contacts) per unit of time by different observers. The issue is further confounded by our inability to assess what an effective contact is: a definition that may have to be tied in to the density of floating virus particles, air circulation patterns, or whether or not  contaminated surfaces are touched by susceptible individuals. In short, defining and measuring a contact or an effective contact, turns out to be incredibly challenging \cite{MossongETal2008}. That said, experimental methods may be used to estimate the average risk of acquiring, for example, tuberculosis (TB) or influenza, to individuals that spend on the average 3 hours per day in public transportation, in Mexico City or New York City.\\

In this paper we propose the use of residence times in heterogeneous environments, as a proxy for ``effective" contacts over an ``$x$" windows in time. Catching a communicable disease would of course depend on the presence of infected/infectious individuals (a necessary condition), the level of ``risk" within a given ``patch'' (crowded bars, airports, schools, work places, etc), and the time spent in such environment. Risk of infection is assumed to be a function  of the time spent in pre-specified environments;  risk that may be experimentally measured. We argue that characterizing a landscape as a collection of patches defined by risk (public transportation, schools, malls, work place, homes, etc) is possible, especially if the risk of infection in such ``local'' environments is in addition a function of residence times and disease levels. Ranking patch-dependent  risks of infection via the values of the transmission rate ($\beta$) per unit of time, may therefore be possible and useful. The reinterpretation of $\beta$ and the use of residence times move us away from the world of models that account for transmission via the use of differential susceptibility to the world where infection depends on local environmental risk.\\

Consequently, we introduce a residence times framework in the context of a multi-group system defined by patch-dependent risk (defined by $\beta$). We study the role of patch residence times on disease dynamics within endemic and single outbreak multi-group scenarios. Specifically, the study of the impact of patch residence times (modeled by a matrix of constants) on disease dynamics within a Susceptible-Infected-Susceptible ($SIS$) framework is carried out first, under the philosophy found in \cite{Brauer2008,BrauerCCC94,BrauerCCCVelasH96,BraVdd01,BrauerWatmough2009,MR1938908,HadCasti95,HuangCookeCCC92}. Individuals move across patches as a function of their assessment of relative levels of infection in each area (studies using alternative classical approaches are found in  \cite{BraVdd01,BrauerVddWang2008, HeiderichHuangCCC2002,ValascoHernandezBrauerCCC96}). Generalizations are explored through simulations of the two-patch SIS model with state-dependent residence times within our framework. The results are compared to the disease dynamics generated by constant residence times. \\

The rest of this paper is organized as follows: Section \ref{sec:1GeneralModel} introduces a general $n$ patch $SIS$ model that accounts for residence times. Theoretical results on the role of residence times matrix ($\mathbb P$) on disease dynamics are carried out using the residence times dependent basic reproduction number $\mathcal R_0(\mathbb P)$. Patch-specific reproduction numbers $\mathcal R_0^i(\mathbb P)$, $i=1,\dots,n$ are used to highlight the impact of residence times matrices on cases that includes non-strongly connected patch configuration. Section \ref{StateDependant2Patches} explores, through simulations, the dynamics of the $SIS$ model under a state-dependent residence times matrix in a two-patch system;  $\mathbb P\equiv\mathbb P(I_1,I_2)$. That is, when the decisions to spend time in a patch are a function of patch-disease prevalence. Section \ref{sec:FES} highlights our framework in the case of a two patch single outbreak $SIR$ model following the work of Brauer \cite{Brauer2008,BrauerWatmough2009}, and discusses the role of $\mathbb P$ on the final epidemic size. Section \ref{CCLDiscussions} collects our observations, conclusions and discusses future work. The detailed proofs of our theoretical results are provided in the Appendix.\\

\section{A general $n$-patch $SIS$ model with residence times}
\label{sec:1GeneralModel}
A general $n$-patch SIS model with residence time matrix $\mathbb P$ is derived. The global analysis of the model is carried out via the basic reproduction number $\mathcal R_0$. We also include patch-dependent disease persistence conditions. \\
\subsection{Model derivation}

We model disease dynamics within an environment defined by $n$ patches (or risk areas) and so, we let $N_i(t), i=1,2...,n$ denote resident population at Patch $i$ at time $t$. We assume that Patch $i$ residents spend $p_{ij}\in [0,1]$ time in Patch $j$, with $\sum_{j=1}^{j=n} p_{ij}=1$, for each $i=1,\dots,n$. 
 In extreme cases, for examples, we may have, for $p_{ij}=0$, $i\neq j$, that is Patch $i$ residents spend no time in Patch $j$ while $\sum_{j\neq i}p_{ij}=1$  (or equivalently $p_{ii}=0$) would imply that Patch $i$ residents spend all their time in Patch $j$ (with $j=1,\dots, n$ and $j\neq i$) even though their patch is (labelled) $i$. In the absence of disease dynamics, the population of Patch $i$ residents is modeled by the following equation:
\bae\label{ni}
\frac{dN_i}{dt}&=&b_i-d_i N_i
\eae where the parameters $b_i$, $d_i$ represent the birth rate, and the natural \textit{per capita} death rate in Patch $i$, respectively. Hence, the Patch $i$  resident population approaches the constant $\frac{b_i}{d_i}$ as $t\to\infty$. \\

In the presence of disease, we assume that disease dynamics are captured by an $SIS$ model, thus,  the Patch $i$  resident population is divided into susceptible and infected classes, represented by  $S_i, \, I_i$, respectively, with $S_i+I_i=N_i$.  We further assume that (a) there is no additional death due to disease; (b) the Patch $i$ Infected resident population recovers and goes back to the susceptible class at the \textit{per capita} rate $\gamma_i$; (c) the residence time matrix $\mathbb P=(p_{ij})_{i=1,..,n}^{j=1,..,n}$ collects the proportion of times spent by $i$-residents in $j$-environments, $i=1,\dots, n$ and $j=1, \dots, n$. The disease dynamics are therefore described by the following equations:
\bae\label{nM}\begin{array}{lcl}
\dot S_i&=&b_i-d_iS_i+\gamma_i I_i-\sum_{j=1}^n(\textrm{$S_i$ infected in Patch } j)\\
\dot I_i&=&\sum_{j=1}^n(\textrm{$S_i$ infected in Patch } j)- \gamma_i I_i-d_i I_i\\
\dot N_i&=&b_i-d_iN_i.
\end{array}\eae
We model $S_i$ infection within Patch $j$ in the following way: 
\begin{itemize}
\item Since each $p_{ij}$ entry of  $\mathbb P$ denotes the \textit{proportion of time} that Patch $i$ residents spent mingling in Patch $j$, we have that:
\begin{itemize}
\item There are  $p_{ij} N_i=p_{ij} S_i + p_{ij}I_i$ Patch $i$ residents in Patch $j$ on the average at time $t$.
\item The total Patch $j$, the total effective population is $\sum_{k=1}^{n}p_{kj}N_k$, of which $\sum_{k=1}^{n}p_{kj}I_k$ are infected. Hence, the proportion of infected individuals in Patch $j$ is $\frac{\sum_{k=1}^{n}p_{kj}I_k}{\sum_{k=1}^{n}p_{kj}N_k}$ and well defined, as long as there exists a $k$ such that $p_{kj}>0$; so that the population in Patch $j$ is nonzero.
\end{itemize}
\item Hence, the $[\textrm{$S_i$ infected per unit of time in Patch } j]$ can be represented as the product of the following three items:
$$\underbrace{\beta_j}_{\textbf{the risk of infection in Patch $j$ }}\times \underbrace{p_{ij}S_i}_{\textbf{Susceptible from Patch $i$ who are currently in Patch $j$ }}$$ $$\times \underbrace{\frac{\sum_{k=1}^{n}p_{kj}I_k}{\sum_{k=1}^{n}p_{kj}N_k}}_{\textbf{Proportion of infected in Patch $j$}}.$$

The transmission takes on a modified frequency-dependent form that depends on how much time individuals of each epidemiological class spend in a particular area, and where $\beta_j$ differs by patch to reflect spatial differences in potential infectivity. More precisely, $\beta_j$ is assumed to be a patch-specific measure of disease risk per unit of time with its effectiveness tied in to local environmental and sanitary conditions. Therefore,
\bae\label{SifromJ}
[\textrm{$S_i$ infected per unit of time in Patch } j]\equiv\beta_j\times p_{ij}S_i\times \frac{\sum_{k=1}^{n}p_{kj}I_k}{\sum_{k=1}^{n}p_{kj}N_k}\eae provided that there exists $k$ such that $p_{kj}>0$.

\end{itemize}
Model \eqref{nM} can be rewritten as follows:
\bae\label{nMc}\begin{array}{lcl}
\dot S_i&=&b_i-d_iS_i+\gamma_i I_i-\sum_{j=1}^n\left(\beta_j\times p_{ij}S_i\times \frac{\sum_{k=1}^{n}p_{kj}I_k}{\sum_{k=1}^{n}p_{kj}N_k}\right),\\
\dot I_i&=&\sum_{j=1}^n\left(\beta_j\times p_{ij}S_i\times \frac{\sum_{k=1}^{n}p_{kj}I_k}{\sum_{k=1}^{n}p_{kj}N_k}\right)- \gamma_i I_i-d_i I_i,\\
\dot N_i&=&b_i-d_iN_i,
\end{array}\eae
with, the dynamics of the Patch $i$ resident total population modeled  by the equation: $\dot N_i(t)=b_i-d_iN_i(t)$, where $S_i+I_i=N_i$, which implies that $\displaystyle N_i(t)\to\frac{b_i}{d_i}$ as $t\to+\infty$. Theory of asymptotically autonomous systems for triangular systems \cite{CasThi95,0478.93044} guaranties that System (\ref{nMc}) is asymptotically equivalent to:

 \begin{equation} \label{IGenlDetail}
 \begin{array}{lcl}
\dot I_i&=&\sum_{j=1}^n\left(\beta_j p_{ij}\left(\frac{b_i}{d_i}-I_i\right) \frac{\sum_{k=1}^{n}p_{kj}I_k}{\sum_{k=1}^{n}p_{kj}\frac{b_k}{d_k}}\right)- (\gamma_i +d_i) I_i\\
&=&I_i\left(\frac{b_i}{d_i}-I_i\right)\left(  \sum_{j=1}^{n}\frac{\beta_jp_{ij}^2}{\sum_{k=1}^{n}p_{kj}\frac{b_k}{d_k}}\right)+\left(\frac{b_i}{d_i}-I_i\right)\sum_{j=1}^{n}\frac{\beta_jp_{ij}\sum_{k=1,k\neq i}^{n}p_{kj}I_k}{\sum_{k=1}^{n}p_{kj}\frac{b_k}{d_k}}  -(d_i+\gamma_i )I_i  
\end{array}\end{equation}
for $i=1,2,\dots,n$, with residence times matrix $\mathbb P=\left(p_{ij}\right)_{i=1,...,n}^{j=1,..,n}$ satisfying the conditions: \\

\noindent\textbf{HP1.} At least one entry in each column of $\mathbb P$ is strictly positive; and \\

\noindent\textbf{HP2.} The sum of all entries in each row is one, i.e., $\sum_{j=1}^n p_{ij}=1$ for all $i$.\\

%
%
\subsection{Equilibria, basic reproduction number and global analysis }

To analyze the system, we investigate the basic reproduction number of the system with fixed residence times to better understand its properties in the absence of behavioral responses to risk. We let $\mathcal B=\left(\beta_1,\beta_2,\dots,\beta_n\right)^t$ define the risk of infection vector;  $\beta_i$ a measure of the risk per susceptible per unit of time while in residence in Patch $i$. \\

Letting  
$\displaystyle S=
\left(S_1, S_2, \dots, S_n\right)^t,\;I=\left(I_1, I_2, \dots,I_n\right)^t,\;$ $\displaystyle \bar N=\left(
\frac{b_1}{d_1},\frac{b_2}{d_2},\dots, \frac{b_n}{d_n}\right)^t$, and $\tilde N =\mathbb P^t\bar N= \left(\begin{array}{cc}
\sum_{k=1}^{n}p_{k1}\frac{b_k}{d_k}  \\
\sum_{k=1}^{n}p_{k2}\frac{b_k}{d_k} \\
\vdots\\
\sum_{k=1}^{n}p_{kn}\frac{b_k}{d_k} 
\end{array}\right).$
Then System (\ref{IGenlDetail}) can be rewritten in the following compact (vectorial) form:

 \begin{equation} \label{SIScompact}
\dot I=\textrm{diag}(\bar N-I)\mathbb P\textrm{diag}(\mathcal B)\textrm{diag}(\tilde N)^{-1}\mathbb P^tI-\textrm{diag}(d_I+\gamma_I)I
\end{equation} with state space in $\mathbb R^n_+$. System \eqref{SIScompact} has the compact set $\Omega=\{I\geq\mathbf{0}_{\mathbb R^n},\;I\leq\bar N \}$ as its global attractor.  This implies that the populations involved are ``biologically" well-defined since solutions of \eqref{SIScompact} will converge to and stay in $\Omega$. We therefore restrict the dynamics of \eqref{SIScompact} to the compact set $\Omega$.\\%

  The analysis of System \eqref{SIScompact} is naturally tied in to the basic reproductive number $\mathcal R_0$ \cite{MR1057044,VddWat02}; the average number of secondary cases produced by an infected individual during its infectious period while interacting with a purely susceptible population.  $\mathcal R_0$ is given by (see the detailed formulation in Appendix):
\bae\label{cR0}
\mathcal R_0&=&\rho(-\textrm{diag}(\bar N)\mathbb P\textrm{diag}(\mathcal B)\textrm{diag}(\tilde N)^{-1}\mathbb P^tV^{-1})\eae
where $V=-\textrm{diag}(d_I+\gamma_I)$, $\displaystyle d_I=\left(d_1, d_2, \dots,d_n\right)^t$ and $\displaystyle \gamma_I=\left(\gamma_1, \gamma_2, \dots,\gamma_n\right)^t$.\\

 The basic reproduction number $\mathcal R_0$ is used to establish global properties of System (\ref{SIScompact}). 
For the relevant literature on global stability for multi-group or metapopulation models, see \cite{ArVddR003,IggidrSalletTsanou,KuniyaMuroya2014,LajYo76,SattenspielSimon88} and the references therein. We  define the disease free equilibrium (DFE) of System \eqref{SIScompact} as $I^\ast=\mathbf{0}_{\mathbb R^n}$ and the endemic equilibrium  (when $\mathcal R_0>1$) as $\bar I$ where all components are positive. By using the same approach as in \cite{IggidrSalletTsanou,LajYo76}, we arrive at the following theorem regarding the global dynamics of  Model (\ref{SIScompact}).

\begin{theorem}\label{MainTheo}[Global dynamics of Model (\ref{SIScompact})] Suppose that the residence times matrix  $\mathbb P$ is irreducible, then the following statements hold:
\begin{itemize}
\item If $\mathcal R_0\leq1$, the DFE $I^\ast=\mathbf{0}_{\mathbb R^n}$ is globally asymptotically stable. If $\mathcal R_0>1$ the DFE is unstable.
\item If $\mathcal R_0>1$, there exists a unique endemic equilibrium $\bar I$ which is GAS.
\end{itemize}
\end{theorem}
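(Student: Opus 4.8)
The plan is to exhibit \eqref{SIScompact} as a \emph{cooperative, strongly monotone, strictly sublinear} system on the attracting box $\Omega$, and then to run the Lajmanovich--Yorke dichotomy for such systems (in the form used in \cite{LajYo76,IggidrSalletTsanou}), which reduces the entire global picture to the sign of the stability modulus of the Jacobian at the origin. Write the vector field of \eqref{SIScompact} as $f(I)=\textrm{diag}(\bar N-I)\,A\,I-\mathcal V I$, where $A:=\mathbb P\,\textrm{diag}(\mathcal B)\,\textrm{diag}(\tilde N)^{-1}\mathbb P^t$ is nonnegative with strictly positive diagonal (the latter because every row of $\mathbb P$ sums to one and all $\beta_j>0$), and $\mathcal V:=\textrm{diag}(d_I+\gamma_I)\gg\mathbf 0$ (so $\mathcal V=-V$, with $V$ as in \eqref{cR0}). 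Since $\Omega=\{\mathbf 0\le I\le\bar N\}$ is positively invariant and globally attracting (established in the model derivation), it suffices to work on $\Omega$. There, for $i\ne k$ one computes $\partial f_i/\partial I_k=(\bar N_i-I_i)A_{ik}\ge 0$, so the flow is cooperative; irreducibility of $\mathbb P$ (hence of the effective mixing matrix $A$, and of the Jacobian on $\mathrm{int}\,\Omega$) upgrades this to strong monotonicity. Moreover, for $0<\lambda<1$ and $v\gg\mathbf 0$,
$$f(\lambda v)-\lambda f(v)=\lambda(1-\lambda)\,\textrm{diag}(v)\,A v\gg\mathbf 0,$$
so $f$ is strictly sublinear on $\Omega$.

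Second, I would tie the threshold to $\mathcal R_0$. Put $F:=\textrm{diag}(\bar N)A$; then $J_0:=Df(\mathbf 0)=F-\mathcal V$ is an irreducible Metzler matrix, and comparison with \eqref{cR0} gives $\mathcal R_0=\rho(F\mathcal V^{-1})$. The standard next-generation argument \cite{VddWat02} then yields $\mathrm{sign}\,s(J_0)=\mathrm{sign}(\mathcal R_0-1)$, where $s(\cdot)$ is the stability modulus. By Perron--Frobenius for irreducible Metzler matrices, $s(J_0)$ is a simple eigenvalue with strictly positive right and left eigenvectors $v$ and $w$.

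Third, I would treat the two regimes. \textbf{(i)} If $\mathcal R_0\le 1$, i.e.\ $s(J_0)\le 0$: using $f(I)=J_0I-\textrm{diag}(I)AI$, the function $U(I)=w^tI$ satisfies $\dot U=s(J_0)\,w^tI-\sum_i w_iI_i(AI)_i\le 0$; on the largest invariant subset of $\{\dot U=0\}$ each $I_i(AI)_i$ vanishes, but $(AI)_i\ge A_{ii}I_i$ with $A_{ii}>0$ forces $I_i=0$ for every $i$, so that set is $\{\mathbf 0\}$ and LaSalle's invariance principle gives that the DFE is globally asymptotically stable; if $\mathcal R_0>1$ then $s(J_0)>0$ and the DFE is unstable by linearization. \textbf{(ii)} If $\mathcal R_0>1$, i.e.\ $s(J_0)>0$: for small $\varepsilon>0$, $f(\varepsilon v)=\varepsilon s(J_0)v-\varepsilon^2\textrm{diag}(v)Av\gg\mathbf 0$, so $\varepsilon v$ is a strict subsolution and, by monotonicity, the trajectory from $\varepsilon v$ increases to a positive equilibrium $\bar I$, giving existence; likewise $\bar N$ is a strict supersolution since $f(\bar N)=-\mathcal V\bar N\ll\mathbf 0$. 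Uniqueness follows from strict sublinearity: if $\bar I,\bar J\gg\mathbf 0$ were distinct equilibria, set $\lambda^\ast=\sup\{\lambda>0:\lambda\bar I\le\bar J\}$ (after possibly swapping $\bar I$ and $\bar J$, $\lambda^\ast\le 1$); then $f(\lambda^\ast\bar I)>\lambda^\ast f(\bar I)=\mathbf 0$ (in fact $\gg\mathbf 0$ since $A\bar I\gg\mathbf 0$), so the strongly monotone flow pushes $\lambda^\ast\bar I$ strictly above itself while staying $\le\bar J$, contradicting the maximality of $\lambda^\ast$; hence $\bar I=\bar J$. Finally, for any $I_0\in\Omega$ with $I_0>\mathbf 0$ (so $I(t)\gg\mathbf 0$ after a short time, by strong monotonicity) one squeezes the trajectory between the increasing trajectory from a suitable $\varepsilon v\le I(t)$ and the decreasing trajectory from $\bar N\ge I(t)$; both limits are positive equilibria, hence both equal $\bar I$, so $I(t)\to\bar I$.

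The main obstacle is part (ii): producing the endemic equilibrium and, above all, combining strict sublinearity with strong monotonicity to pin down its uniqueness and global attractivity (the Lajmanovich--Yorke squeeze). The one genuinely structural point requiring care is that irreducibility of $\mathbb P$ must deliver irreducibility of the effective mixing matrix $A$ --- equivalently, that the patches remain coupled through shared environments --- since the strictly positive Perron eigenvector $v$ of $F-\mathcal V$, and hence the subsolution construction and the strong-monotonicity arguments, all rely on it; the sub/supersolution bookkeeping and the LaSalle step are then routine.
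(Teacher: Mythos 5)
Your proposal is correct in its essentials and lives in the same monotone-systems world as the paper's proof, but it packages the key dichotomy differently. The paper writes the field as $(F+V)I-\textrm{diag}(I)\mathbb P\textrm{diag}(\mathcal B)\textrm{diag}(\tilde N)^{-1}\mathbb P^tI$, checks that $DX(I)$ is an irreducible Metzler matrix decreasing in $I$, and then invokes Hirsch's theorem wholesale to get ``all orbits tend to $0$ or to a unique interior equilibrium,'' supplementing this with the Metzler-matrix stability criterion $DX(\bar I)\bar I=-W(\bar I)\bar I<0$ for the endemic state and a separate linear Lyapunov function $\langle c,I\rangle$ for the critical case $\mathcal R_0=1$. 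You instead reprove the dichotomy by hand: your identity $f(\lambda v)-\lambda f(v)=\lambda(1-\lambda)\textrm{diag}(v)Av$ is the quantitative form of Hirsch's hypothesis that $DF$ is decreasing, and your sub/supersolution squeeze plus the $\lambda^\ast$-argument replaces the citation with a self-contained Lajmanovich--Yorke construction; your single Lyapunov function $w^tI$ also handles $\mathcal R_0<1$ and $\mathcal R_0=1$ uniformly, which is tidier than the paper's split treatment. One small repair is needed in the uniqueness step: when $\lambda^\ast=1$ you get $f(\lambda^\ast\bar I)=\mathbf 0$, not $\gg\mathbf 0$, so the ordered case $\bar I\le\bar J$, $\bar I\ne\bar J$ must be handled separately (strong monotonicity forces $\bar I\ll\bar J$, and you then rerun the argument with the roles of $\bar I$ and $\bar J$ exchanged, where the relevant supremum is strictly less than $1$).

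The one substantive issue is precisely the point you flagged but did not resolve: irreducibility of $\mathbb P$ does \emph{not} in general imply irreducibility of $A=\mathbb P\,\textrm{diag}(\mathcal B)\,\textrm{diag}(\tilde N)^{-1}\mathbb P^t$. Take $n=2$ with $p_{11}=p_{22}=0$, $p_{12}=p_{21}=1$: $\mathbb P$ is irreducible, yet $A_{12}=\beta_1p_{11}p_{21}/\tilde N_1+\beta_2p_{12}p_{22}/\tilde N_2=0$, so $A$ is diagonal, the two equations decouple, and boundary equilibria such as $(0,\bar I_2)$ can exist --- a situation the paper itself describes in Section~\ref{StateDependant2Patches}. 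So the hypothesis actually needed for strong monotonicity, for the strictly positive Perron vectors, and for the conclusion of the theorem is irreducibility of the effective mixing matrix $A$ (patches coupled through shared environments), not of $\mathbb P$. The paper's own proof asserts the same implication without justification, so you inherit its gap rather than create a new one; but since your whole architecture (and the paper's) rests on this point, it should be stated as the hypothesis or proved under additional assumptions (e.g., $p_{ii}>0$ for all $i$, which makes the graphs of $A$ and $\mathbb P$ coincide).
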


\noindent\textbf{Remarks:} The detailed proof of Theorem \ref{MainTheo} is provided in Appendix \ref{AppProof}.  These results imply that System \eqref{SIScompact} is robust, that is, disease outcomes are completely determined by whether or not the reproduction number $\mathcal R_0$ is greater or less than one. The results of Theorem \ref{MainTheo} while powerful, do not provide easily accessible insights on the impact of the residence matrix $\mathbb P$ on the levels of infection within each patch. \\

Direct insights on the effects of $\mathbb P$, are derived by focusing on the levels of endemicity within each patch. The following two definitions help set the stage for the discussion:

\begin{itemize}
\item The basic reproduction number for Patch $i$ in the absence of movement ($p_{ii}=1$ or $\sum_{i\neq j}p_{ij}=0$), $SIS$ model, is defined as $\mathcal R_0^i\equiv\frac{\beta_i}{d_i+\gamma_i}$, which determines whether or not the disease will be endemic in Patch $i$. In short disease will die out if $\mathcal R_0^i\leq 1$ with a unique endemic equilibrium, that is GAS, if $\mathcal R_0^i>1$.
\item The basic reproduction number  associated with Patch $i$, under the presence of multi-patch residents, is defined as follows:
 $$ \begin{array}{lcl}
 \mathcal R_0^i(\mathbb P)&=&\frac{\sum_{j=1}^n\beta_j \left(p_{ij} \frac{b_i}{d_i}\right)\left( \frac{p_{ij}}{\sum_{k=1}^{n}p_{kj}\frac{b_k}{d_k}}\right)}{d_i+\gamma_i}=\frac{\sum_{j=1}^n\beta_j p_{ij}\left( \frac{\left(p_{ij} \frac{b_i}{d_i}\right)}{\sum_{k=1}^{n}p_{kj}\frac{b_k}{d_k}}\right)}{d_i+\gamma_i}\\
 &=&{\mathcal R}_0^i \times \sum_{j=1}^n\left(\frac{\beta_j}{\beta_i}\right) p_{ij}\left( \frac{\left(p_{ij} \frac{b_i}{d_i}\right)}{\sum_{k=1}^{n}p_{kj}\frac{b_k}{d_k}}\right).
 \end{array}
 $$
\end{itemize}
We explore the role  that $\mathcal R_0^i(\mathbb P)$ plays in determining the impact of all residents on disease dynamics persistence in  Patch $i$ in the following theorem.
\begin{theorem}\label{MainTheo2}[The endemicity of disease in Patch $i$] Assume that the residence times matrix  $\mathbb P$ satisfies Condition \textbf{HP1} and \textbf{HP2} but that some of its entries can be zeros.
\begin{itemize}
 \item If $\mathcal R_0^i(\mathbb P)>1$, then the disease persists in Patch $i$. 
 \item If the following conditions hold:
$$\mbox{ \textbf{H: }}p_{kj}=0 \mbox{ for all  }k=1,..,n, \mbox{ and } k\neq i, \,\,\mbox{ whenever } p_{ij}>0,$$ then we have
$$\mathcal R_0^i(\mathbb P)={\mathcal R}_0^i \times \sum_{j=1}^n\left(\frac{\beta_j}{\beta_i}\right) p_{ij}\left( \frac{\left(p_{ij} \frac{b_i}{d_i}\right)}{\sum_{k=1}^{n}p_{kj}\frac{b_k}{d_k}}\right)={\mathcal R}_0^i \times \sum_{j=1}^n \left(\frac{\beta_j}{\beta_i}\right) p_{ij}.$$Thus, when Condition \textbf{H} holds and ${\mathcal R}_0^i \times \sum_{j=1}^n \left(\frac{\beta_j}{\beta_i}\right) p_{ij}<1$, then
endemic levels of disease cannot be supported in Patch $i$. That is, $$\lim_{t\rightarrow\infty} I_i(t)=0.$$
\end{itemize}
\end{theorem}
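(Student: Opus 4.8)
The plan is to analyze each of the two bullet points separately, since they run in opposite directions (persistence versus extinction) and require different tools.

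For the first bullet (persistence when $\mathcal R_0^i(\mathbb P)>1$), I would work directly with the $i$-th scalar equation of System \eqref{SIScompact}, namely the equation for $\dot I_i$ displayed in \eqref{IGenlDetail}. Restricting to the invariant region $\Omega$ (so $0\le I_k\le b_k/d_k$ for all $k$), I would bound $\dot I_i$ from below by discarding the nonnegative cross-terms $\sum_{j}\frac{\beta_j p_{ij}\sum_{k\neq i}p_{kj}I_k}{\sum_k p_{kj} b_k/d_k}$ (they are $\ge 0$ on $\Omega$), obtaining
\begin{equation*}
\dot I_i \;\ge\; I_i\Big(\tfrac{b_i}{d_i}-I_i\Big)\Big(\textstyle\sum_{j=1}^n \frac{\beta_j p_{ij}^2}{\sum_k p_{kj} b_k/d_k}\Big) - (d_i+\gamma_i) I_i .
\end{equation*}
Writing $A_i := \sum_{j=1}^n \frac{\beta_j p_{ij}^2}{\sum_k p_{kj} b_k/d_k}$, note that $(b_i/d_i) A_i/(d_i+\gamma_i) = \mathcal R_0^i(\mathbb P)$, so the right-hand side is $\ge I_i\big[(b_i/d_i - I_i)A_i - (d_i+\gamma_i)\big]$, which is strictly positive for $I_i$ in a punctured neighbourhood of $0$ precisely because $\mathcal R_0^i(\mathbb P)>1$. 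This gives uniform persistence of $I_i$: a standard comparison argument (compare with the scalar logistic-type ODE $\dot x = x[(b_i/d_i - x)A_i - (d_i+\gamma_i)]$, whose positive equilibrium is attracting) yields $\liminf_{t\to\infty} I_i(t) > 0$ for all solutions with $I_i(0)>0$. I should remark that persistence here is at the level of the single coordinate, and does not need irreducibility of $\mathbb P$.

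For the second bullet, Condition \textbf{H} is the crucial simplification: it says that whenever Patch $i$ residents spend time in Patch $j$ (i.e.\ $p_{ij}>0$), \emph{nobody else} visits Patch $j$ ($p_{kj}=0$ for $k\neq i$). Hence in every term of the $\dot I_i$ equation the denominator $\sum_k p_{kj} b_k/d_k$ collapses to $p_{ij} b_i/d_i$, the cross-terms $\sum_{k\neq i} p_{kj} I_k$ vanish identically, and the whole equation decouples:
\begin{equation*}
\dot I_i \;=\; I_i\Big(\tfrac{b_i}{d_i}-I_i\Big)\Big(\textstyle\sum_{j:\,p_{ij}>0} \tfrac{\beta_j p_{ij}^2}{p_{ij} b_i/d_i}\Big) - (d_i+\gamma_i) I_i \;=\; I_i\Big[\tfrac{d_i}{b_i}\Big(\tfrac{b_i}{d_i}-I_i\Big)\textstyle\sum_{j}\beta_j p_{ij} - (d_i+\gamma_i)\Big].
\end{equation*}
(The simplified formula for $\mathcal R_0^i(\mathbb P)$ claimed in the statement, $\mathcal R_0^i \sum_j (\beta_j/\beta_i) p_{ij}$, drops straight out of this.) This is now an autonomous scalar logistic equation in $I_i$ alone. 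When $\mathcal R_0^i \sum_j (\beta_j/\beta_i) p_{ij} < 1$, i.e.\ $\frac{d_i}{b_i}\cdot\frac{b_i}{d_i}\sum_j \beta_j p_{ij} < d_i+\gamma_i$, the bracket is $\le -(d_i+\gamma_i) + \sum_j \beta_j p_{ij} < 0$ for every $I_i\in[0,b_i/d_i]$, so $\dot I_i \le -c\, I_i$ for some $c>0$ on $\Omega$, whence $I_i(t)\to 0$ exponentially.

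The main obstacle is a subtle one hiding in the well-posedness of the right-hand side rather than in any hard estimate: Condition \textbf{HP1} only guarantees the denominators $\sum_k p_{kj} b_k/d_k$ are nonzero for columns $j$ that are actually \emph{used}; one must check that under \textbf{H} the columns $j$ appearing in Patch $i$'s equation indeed have $p_{ij}>0$ (so their denominator is $p_{ij} b_i/d_i \neq 0$), and that columns with $p_{ij}=0$ contribute a zero numerator and so can be dropped without division-by-zero issues. Once this bookkeeping is handled, both parts reduce to elementary one-dimensional comparison arguments; the first part additionally uses that the neglected cross-terms are sign-definite on the forward-invariant set $\Omega$, which is exactly what the global-attractor property of $\Omega$ (already established above \eqref{SIScompact}) provides.
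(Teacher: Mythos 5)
Your argument is essentially the paper's own: for persistence you drop the nonnegative cross-infection terms on $\Omega$ to obtain the same lower differential inequality for $\dot I_i$, and for extinction you use Condition \textbf{H} to collapse the denominators $\sum_k p_{kj}b_k/d_k$ to $p_{ij}b_i/d_i$ and decouple the $I_i$ equation, exactly as in the paper's appendix. The only difference is the closing step of the persistence argument: the paper checks that $\dot I_i/I_i$ evaluated at $I_i=0$ is positive and then invokes Hutson's average Lyapunov function theorem, whereas you finish with a scalar comparison against the logistic-type equation $\dot x = x\left[(b_i/d_i - x)A_i - (d_i+\gamma_i)\right]$; both are valid, and yours is slightly more self-contained (it even gives the explicit asymptotic lower bound $b_i/d_i-(d_i+\gamma_i)/A_i$), though, like the paper, it tacitly assumes $I_i(0)>0$ and that the trajectory already lies in $\Omega$. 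Your bookkeeping remark about which columns actually contribute to Patch $i$'s equation under \textbf{H} is correct and is a point the paper passes over silently.
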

 \noindent\textbf{Remarks:} The detailed proof of Theorem \ref{MainTheo2} is provided in Appendix \ref{AppProof2}. The results of Theorem \ref{MainTheo2} give insights on the role that the infection risk (measured by $\mathcal B$) and the residence time matrix ($\mathbb P$) have in  promoting or suppressing infection. Further, a closer look at  the expression of the general basic reproduction number in Patch $i$, namely
  $$\mathcal R_0^i(\mathbb P)={\mathcal R}_0^i \times \sum_{j=1}^n\left(\frac{\beta_j}{\beta_i}\right) p_{ij}\left( \frac{\left(p_{ij} \frac{b_i}{d_i}\right)}{\sum_{k=1}^{n}p_{kj}\frac{b_k}{d_k}}\right),$$
 leads to the following observations:
 \begin{enumerate}
 \item The movement between patches, modeled via residence time matrix $\mathbb P$, can promote endemicity: For example, if ${\mathcal R}_0^i =\frac{\beta_i}{d_i+\gamma_i}\leq 1$, i.e., there is no endemic disease in patch $i$. Then, the presence of movement connecting Patch $i$ to possibly all other patches can support endemic disease levels in the following ways:
 
 \begin{itemize}
 \item Via the presence of high risk patches, that is, there exists a patch $j$ such that $\frac{\beta_j}{\beta_i}$ is large enough. For example, letting $p_{kl}=1/n$ for all $k, l$ with the total population in each patch being the same ($\frac{b_k}{d_k}=K$ for all $k$; $K$ a constant) then $\mathcal R_0^i(\mathbb P)={\mathcal R}_0^i \frac{\sum_{j=1}^n \beta_j}{n\beta_i}$ and consequently, if $\sum_{j=1}^n \beta_j>\frac{n \beta_i}{\mathcal R_0^i}$, then Patch $i$ will  promote the disease at endemic levels.
 \item Whenever individuals spend more time in high risk than in low risk patches. For example, in the extreme case, $p_{ij}=1$ with $\frac{\beta_j}{\beta_i}>\frac{1}{\mathcal R_0^i}$, we have that $\mathcal R_0^i(\mathbb P)>1$, and  thus, endemic disease levels in Patch $i$ can be supported. Patch $j$ ($j=1, \dots, n$ and $j\neq i$) can therefore be considered  the source and Patch $i$ ($i\neq j$) the sink \cite{Arino08,ArDavHarJorVdd05,ArVddR003,ArVdd06,KuniyaMuroya2014,LajYo76,SattenspielSimon88,7606146}.
 \end{itemize}
 \item Under the assumption $\mathcal R_0^i >1$, for an isolated Patch $i$, conditions that lead to disease extinction in the same Patch $i$ under the movement  can be identified. According to Theorem \ref{MainTheo2}, Condition \textbf{J} should be satisfied and so the expression of $\mathcal R_0^i(\mathbb P)$ reduces to
 $$\mathcal R_0^i(\mathbb P)={\mathcal R}_0^i \times \sum_{j=1}^n\left(\frac{\beta_j}{\beta_i}\right) p_{ij}\left( \frac{\left(p_{ij} \frac{b_i}{d_i}\right)}{\sum_{k=1}^{n}p_{kj}\frac{b_k}{d_k}}\right)={\mathcal R}_0^i \times \sum_{j=1}^n \left(\frac{\beta_j}{\beta_i}\right) p_{ij}.$$
Therefore, the only way to have  the value of $\mathcal R_0^i(\mathbb P)$ be less than one, would be when the amount of time spent in Patch $i$ is such that $\sum_{j=1}^n \left(\frac{\beta_j}{\beta_i}\right) p_{ij}<\frac{1}{\mathcal R_0^i(\mathbb P)}$. Therefore, we conclude that the synergy between the residence time matrix $\mathbb P$ and the existence of sufficient low risk patches (i.e.,  $\beta_j\ll\beta_i$) can  suppress a disease outbreak in Patch $i$.
 
 \end{enumerate}

\section{Two patch models: state-dependent residence times matrix}\label{StateDependant2Patches}

We now extend the analysis of disease dynamics to the case where susceptible individuals respond to variations in risk in an automatic way. In particular, we consider the case when susceptible individuals make programmed responses to variations in disease risk, and do not choose their response to optimize an index of wellbeing ( see for example \cite{Brauer2008,BrauerCCCVelasH96,BraVdd01,BrauerWatmough2009}).  While this may not be a very good approximation of disease risk management in real systems, it enables us to explore the implications of certain types of phenomenologically modeled behavioral responses by assuming, for example, that the proportion of time spent in a particular patch depends on the numbers of infected individuals on that particular patch; that is  $\mathbb P\equiv\mathbb P(I_1,I_2)$.\\
 
Possible properties of the proportion of time spent by resident of Patch $i$ into Patch $j$, $i\neq j$, ($p_{ij}$) may include: increases with respect to the growth of infected resident in patch $i$ ($I_i$), or decreases with respect to infected resident in patch $j$ ($I_j$). Mathematically, we would have that
$$\frac{\partial p_{ij}(I_i,I_j)}{\partial I_j}\leq0\quad\textrm{and}\quad \frac{\partial p_{ij}(I_i,I_j)}{\partial I_i}\geq0.$$

In a two-patch system, the use of the relationship $p_{ij}(I_1,I_2)+p_{ji}(I_1,I_2)=1$, reduces the above four conditions on $\mathbb P$, to the following conditions:

$$\frac{\partial p_{11}(I_1,I_2)}{\partial I_1}\leq0\quad\textrm{and}\quad \frac{\partial p_{22}(I_1,I_2)}{\partial I_2}\leq0.$$

Examples of functions $p_{ij}(I_1,I_2)$ with these properties include,

$$p_{12}(I_1,I_2)=\sigma_{12}\frac{1+I_1}{1+I_1+I_2}\quad\textrm{and}\quad p_{21}(I_1,I_2)=\sigma_{21}\frac{1+I_2}{1+I_1+I_2}$$

and $$p_{11}(I_1,I_2)=\frac{\sigma_{11}+\sigma_{11}I_1+I_2}{1+I_1+I_2}\quad\textrm{and}\quad p_{22}(I_1,I_2)=\frac{\sigma_{22}+I_1+\sigma_{22}I_2}{1+I_1+I_2}$$
where $\sigma_{ij}$ are such that $\displaystyle\sum_{j=1}^{2}\sigma_{ij}=1$. \\

More complex behavioral response formulations may also depend on the states of total populations $N_1$ and $N_2$, but the current specification captures important components of risk (infections) and allows us to retain the asymptotic equivalence property applied in the case of fixed residence times. Hence, using the same notation as in System (\ref{SIScompact}) leads to the following two dimensional system with $\mathbb P=\mathbb P(I_1,I_2)$:

\begin{equation} \label{2PNI3Variable}
\left\{\begin{array}{ll}
\dot I_1=X(I_1,I_2)(\frac{b_1}{d_1}-I_1)I_1    +  Y(I_1,I_2)(\frac{b_1}{d_1}-I_1)I_2              -(d_1+\gamma_1 )I_1,\\\\
\dot I_2=Y(I_1,I_2)(\frac{b_2}{d_2}-I_2)I_1    + Z(I_1,I_2)(\frac{b_2}{d_2}-I_2)I_2   -(d_2+\gamma_2 )I_2 ,  
\end{array}\right.
\end{equation}where $$X(I_1,I_2)=\frac{\beta_1p^2_{11}(I_1,I_2)}{p_{11}(I_1,I_2)\frac{b_1}{d_1}+p_{21}(I_1,I_2)\frac{b_2}{d_2}}+\frac{\beta_2p^2_{12}(I_1,I_2)}{p_{12}(I_1,I_2)\frac{b_1}{d_1}+p_{22}(I_1,I_2)\frac{b_2}{d_2}},$$

$$Y(I_1,I_2)=\frac{\beta_1p_{11}(I_1,I_2)p_{21}(I_1,I_2)}{p_{11}(I_1,I_2)\frac{b_1}{d_1}+p_{21}(I_1,I_2)\frac{b_2}{d_2}}+\frac{\beta_2p_{12}(I_1,I_2)p_{22}(I_1,I_2)}{p_{12}(I_1,I_2)\frac{b_1}{d_1}+p_{22}(I_1,I_2)\frac{b_2}{d_2}},$$
and 
$$Z(I_1,I_2)=\frac{\beta_1p^2_{21}(I_1,I_2)}{p_{11}(I_1,I_2)\frac{b_1}{d_1}+p_{21}(I_1,I_2)\frac{b_2}{d_2}}+\frac{\beta_2p^2_{22}(I_1,I_2)}{p_{12}(I_1,I_2)\frac{b_1}{d_1}+p_{22}(I_1,I_2)\frac{b_2}{d_2}},$$where $X(I_1,I_2)$, $Y(I_1,I_2)$ and $Z(I_1,I_2)$ are positive functions of $I_1$ and $I_2$. \\

The basic reproduction number $\mathcal R_0$ is the same as in the previous section since it is computed at the infection-free state, i.e.
$$\mathcal R_0=\rho(\text{diag}(\bar N)\mathbb P\text{diag}(\mathcal B)\text{diag}(\tilde N)^{-1}\mathbb P^t(-V^{-1}))$$where, in this case, we have that
$\mathbb P=\begin{bmatrix}
\sigma_{11} & \sigma_{12}\\\\
\sigma_{21}&\sigma_{22}
  \end{bmatrix}\quad \textrm{and}\quad \sigma_{ij}=p_{ij}(0,0),\;\;  \forall \{i, j\}=\{1,2\}.$\\

The properties of positiveness and boundedness of trajectories of System (\ref{SIScompact}) are preserved in System  (\ref{2PNI3Variable}). In addition, System (\ref{2PNI3Variable}) has  a unique DFE equilibrium whose local stability is determined by the value of the (uncontrolled) $\mathcal R_0$: the DFE is locally asymptotically stable if $\mathcal R_0<1$ while it is unstable if  $\mathcal R_0>1$.\\ 

Let us consider whether System (\ref{2PNI3Variable}) can have a boundary equilibrium such as $(0,\bar I_2)$ or $(\bar I_1,0)$. The assumption that System (\ref{2PNI3Variable})  has such a boundary equilibrium $(0,\bar I_2)$ with $\bar I_2>0$ implies that $Y(0,\bar I_2)=0$. 
 Since $p_{11}(0,I_2)=\frac{\sigma_{11}+I_2}{1+I_2}$ and $p_{22}(0,I_2)=\sigma_{22}$, we deduce that
$$Y(0,I_2)=\frac{\beta_1\sigma_{21}(\sigma_{11}+I_2)}{\frac{\sigma_{11}+I_2}{1+I_2}\frac{b_1}{d_1}+\sigma_{21}\frac{b_2}{d_2}(1+I_2)}+\frac{\beta_2\sigma_{12}\sigma_{22}}{\sigma_{12}\frac{b_1}{d_1}+\sigma_{22}\frac{b_2}{d_2}(1+I_2)}.
$$
This indicates that $Y(0,I_2)=0$ if and only if $\sigma_{21}=0$ and $\sigma_{12}=0$, which requires that:
$$p_{12}=p_{21}=0,\,\,\mbox{ and }p_{11}=p_{22}=1.$$
 A similar arguments can be applied to the boundary equilibrium $(\bar I_1,0)$. Therefore, we conclude that System (\ref{2PNI3Variable}) will have  a boundary equilibrium $\left(\frac{}{}(0,\bar I_2)\;\; \textrm{or}\;\; (\bar I_1,0)\right)$ only in the trivial case of isolated patches, that is, where there is no movement between two patches. This conclusion differs from the state-independent residence matrix model \eqref{SIScompact}, since for example, the two-patch model \eqref{SIScompact}, according to Theorem \ref{MainTheo}, boundary equilibrium $(0,\bar I_2)$ or $(\bar I_1,0)$ can exist when $p_{11}=p_{22}=0$ ($p_{12}=p_{21}=1$).\\

To illustrate the difference between the state-dependent residence matrix model  (\ref{2PNI3Variable}) and the state-independent residence matrix model \eqref{SIScompact}, we look at the situation when $\sigma_{11}=\sigma_{22}=0, \sigma_{12}=\sigma_{21}=1$ ( $p_{11}=p_{22}=0, p_{12}=p_{21}=0$ for the state-independent residence matrix model \eqref{SIScompact}). Under the condition of $\sigma_{11}=\sigma_{22}=0, \sigma_{12}=\sigma_{21}=1$, we have Model  (\ref{2PNI3Variable}), that

 $$p_{12}(I_1,I_2)=\frac{1+I_1}{1+I_1+I_2}\quad\textrm{and}\quad p_{21}(I_1,I_2)=\frac{1+I_2}{1+I_1+I_2}$$

and $$p_{11}(I_1,I_2)=\frac{I_2}{1+I_1+I_2}\quad\textrm{and}\quad p_{22}(I_1,I_2)=\frac{I_1}{1+I_1+I_2}.$$
This difference  has significant impact on disease dynamics (see Fig ~\ref{I1Loop} and Fig ~\ref{I2Loop}, red curves).\\

 In  Fig ~\ref{I2Loop}, we see that the infection in Patch 2 (high risk)  persists in the state-dependent case whereas it dies out when $\mathbb P$ is constant. That is due to the fact that $p_{ii}(I_1,I_2)$ will not equal zero whereas  $p_{ij}(I_1,I_2)$ with $i\neq j$ may. 
 For the constant residence times matrix, the dynamics of the disease in each patch is also independent, where people in patch $i$ infect only susceptible in patch $j$ with $i\neq j$. In Fig \ref{I2Loop} (red solid curve), we observe that the disease dies out in Patch 2 with $\tilde{\mathcal {R}}^2_0=\frac{\beta_1}{d_2+\gamma_2}=0.8571$. For the state-dependent case, unless there is no disease in both patches or one disease-free Patch, the proportion of time residents spend in their own patch is nonzero. This leads the disease to persist in both patches if $\mathcal R_0>1$ (see  Fig \ref{I2Loop}, red dashed curves). However, even in this case, the disease dies out in both patches if $\mathcal R_0<1$ (See Fig \ref{SDR0Less1}, red curves, for instance).

\begin{figure}[ht]
\centering
 \subfigure[Dynamics of the disease in Patch 1. If there is no movement between the patches (blue curves), the disease dies out in the low risk Patch 1 in both approaches with $\mathcal R^1_0=0.7636$.]{
   \includegraphics[scale =.22]{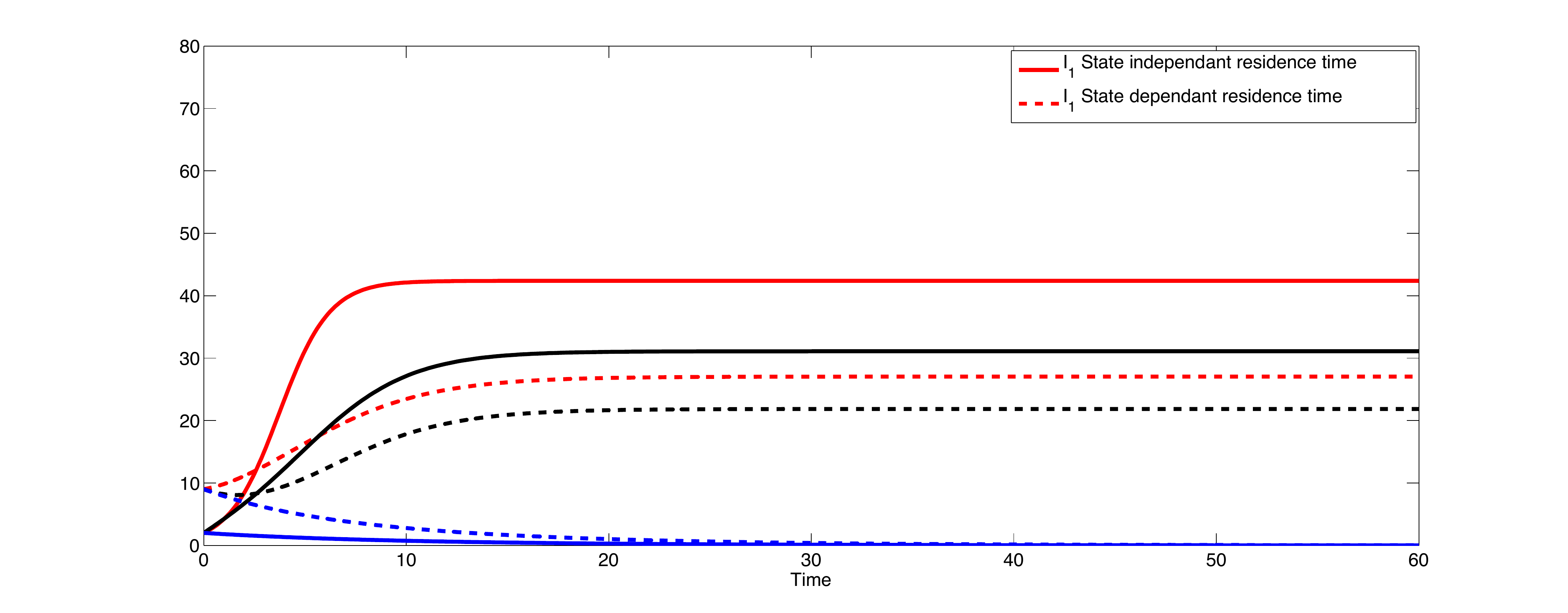}\label{I1Loop} }   
 \subfigure[Dynamics of the disease in Patch 2. In the high mobility case, the disease dies out (solid red curve) for $\mathbb P$ constant, with $\tilde{\mathcal {R}}^2_0=0.8571$, and persists for $\mathbb P$ state-dependent (dashed red curve).]{
   \includegraphics[scale =.22]{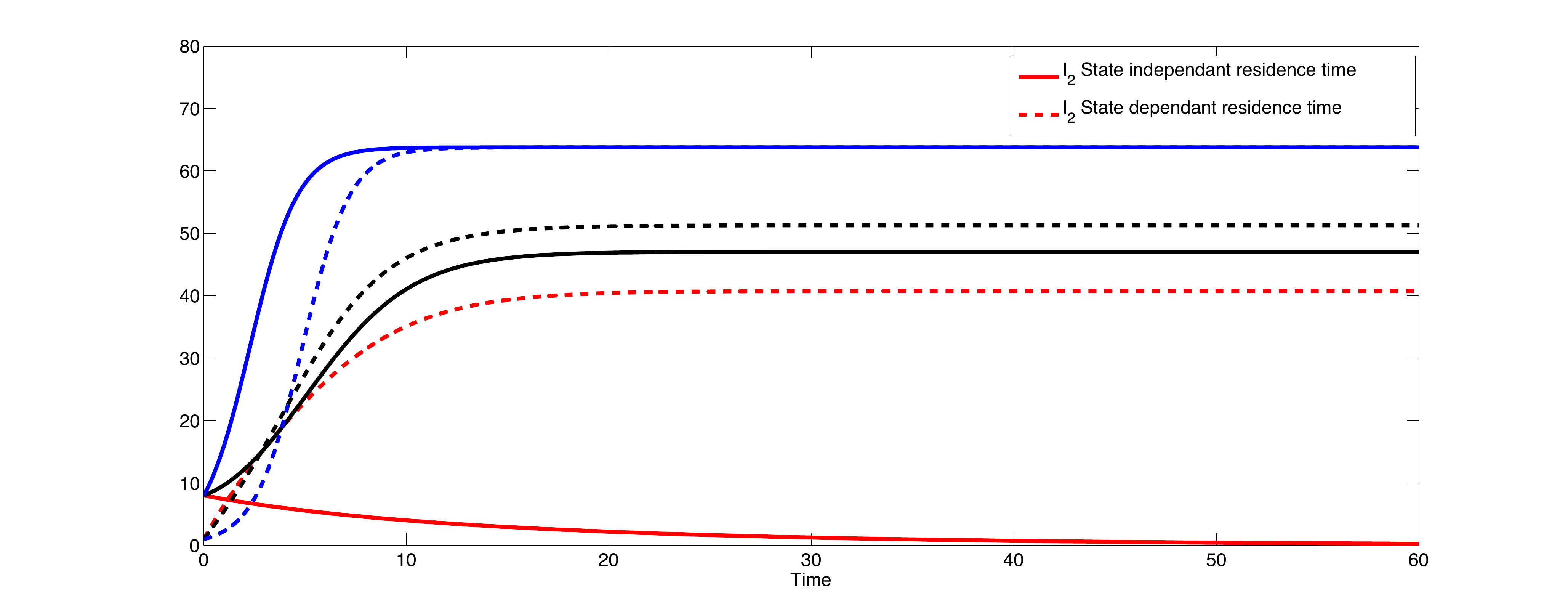}\label{I2Loop} }
  \caption{Coupled Dynamics of $I_1$ and $I_2$ for constant $p_{ij}$ (solid) and state dependent $p_{ij}$ (dashed). The red lines is case of high mobility, i.e. $p_{12}=p_{21}=\sigma_{12}=\sigma_{21}=1$. The black lines represent the symmetric case, i.e: $p_{12}=p_{21}=\sigma_{12}=\sigma_{21}=0.5$ and the blue line represent the polar case, i.e.: $p_{12}=p_{21}=\sigma_{12}=\sigma_{21}=0$.} \label{fig:twofigs}
\end{figure}

\begin{figure}[ht!]
\includegraphics[scale=0.4]{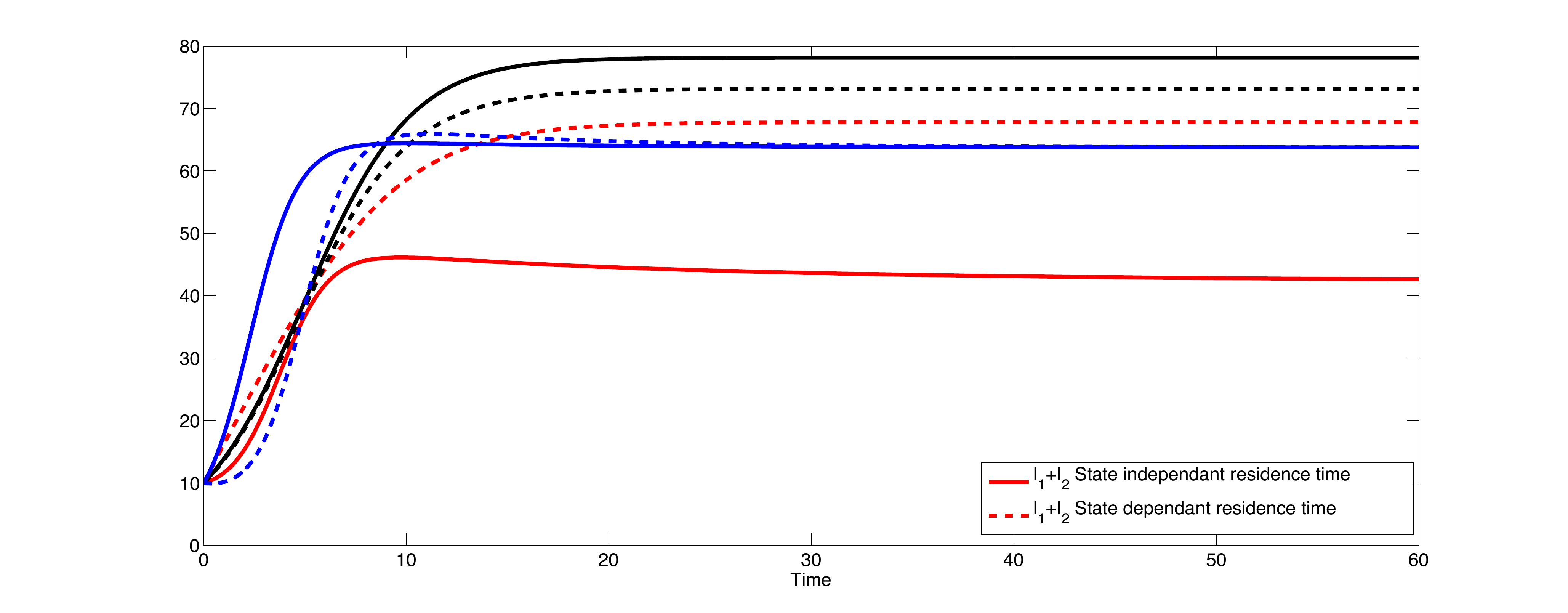}
\caption{Coupled Dynamics of $I_1+I_2$ for constant $p_{ij}$ (solid) and state dependent $p_{ij}$ (dashed). The overall prevalence is higher if if the residence times is symmetric (solid and dashed black curves). The black curves represent the symmetric case ( $p_{12}=p_{21}=\sigma_{12}=\sigma_{21}=0.5$ ), and the blue line represent the polar case ( $p_{12}=p_{21}=\sigma_{12}=\sigma_{21}=0$) and red curves represent high mobility case ($p_{12}=p_{21}=\sigma_{12}=\sigma_{21}=1)$.}
\label{PrevalenceLoop} 
\end{figure}

\subsection{Applications and comparisons:  the two patch cases} 

The analytical results of the global dynamics on the asymptotic behavior of Model  (\ref{2PNI3Variable}) are still unresolved. Hence, we ran simulations to gain some insights on the role of $\mathbb P(I_1,I_2)$ on endemic dynamics. We observe that trajectories converge towards an endemic equilibrium whenever $\mathcal R_0>1$; however, there are substantial differences in the transient dynamics generated by state-dependent $\mathbb P(I_1,I_2)$ when compared to those generated with a constant residence times matrix. \\

Unless stated otherwise, we suppose the following generic values for the simulations: $\beta_1=0.3,\; \beta_2=1.2,\; b_1=9,\;\; d_1=1/7,\; b_2=9,\; d_2=1/10$ and $\gamma_1=\gamma_2=1/4$. From a selected of simulations, it is observed that:
\begin{enumerate}
\item For the symmetric case where $p_{12}=p_{21}=0.5$, the disease is endemic in both patches as predicted by Theorem \ref{MainTheo} since $\mathcal R_0=2.0466$. For the state-dependent case, simulations suggest (Fig \ref{I1Loop} and Fig \ref{I2Loop}, black dashed curves) that trajectories tend to be endemic in both patches.  However, the level of endemicity is lower than the constant case in Patch 1 (low risk patch) and is greater in Patch 2 (high risk patch).\\
\item Fig \ref{PrevalenceLoop} sketches the overall prevalence in both patches with three different scenarios of residence times matrix $\mathbb P$, both the constant and state-dependent case.  The disease persists since the overall $\mathcal R_0>1$ in all three cases. \\
\item The case where there is no movement between patches, that is, $p_{12}=p_{21}=0$ ( $p_{11}=p_{22}=1$) and $\sigma_{12}=\sigma_{21}=0$ (or $p_{12}(I_1,I_2)=p_{21}(I_1,I_2)=0$), corresponds to the case where  the system behaves as two isolated patches.  In this case the disease dies out or persists in Patch $i$ if $\mathcal R_0^i$ is above or below unity in both approaches. This is illustrated on Fig \ref{I1Loop}  and  Fig \ref{I2Loop} where the disease dies out in Patch 1 ( Fig \ref{I1Loop}, blue solid line) where $\mathcal R_0^1=\frac{\beta_1}{d_1+\gamma_1}=0.7636$ and the disease persists in Patch 2 (Fig \ref{I2Loop}, blue solid curve) where $\mathcal R_0^2=\frac{\beta_2}{d_2+\gamma_2}=3.4286$. For the state dependent case ( dashed blue curves on on Fig \ref{I1Loop}  and  Fig \ref{I2Loop}) the outcome is similar to the constant residence times case.\\

\begin{figure}
\includegraphics[scale=0.4]{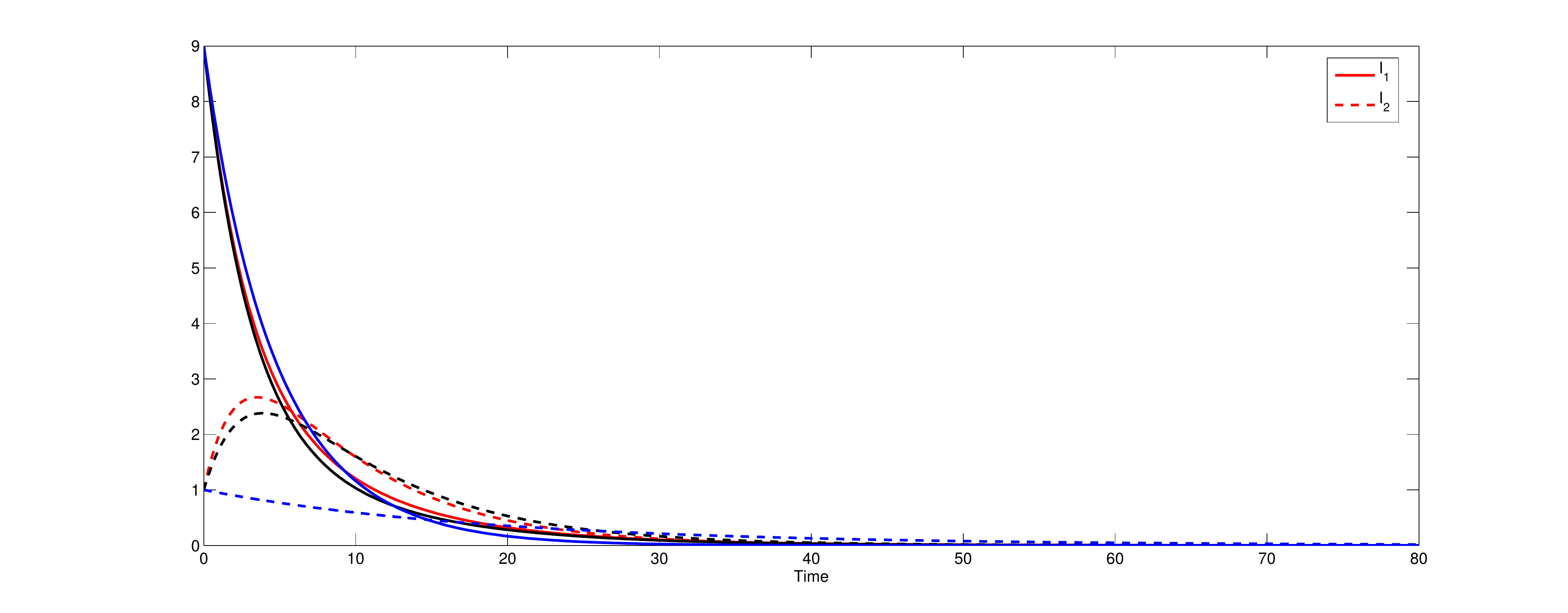}
\caption{\it  Dynamics of $I_1$ and $I_2$ for varying $\sigma_{ij}$ for the state-dependent $p_{ij}(I_1,I_2)$ where $\mathcal R_0<1$. This is obtained by using the values $\beta_1=0.2$ and $\beta_2=0.3$. In all the three cases, the disease dies out in both patches. The black curves represent the symmetric case ( $p_{12}=p_{21}=\sigma_{12}=\sigma_{21}=0.5$ ), the blue line represent the polar case ( $p_{12}=p_{21}=\sigma_{12}=\sigma_{21}=0$) and red curves represent high mobility case ($p_{12}=p_{21}=\sigma_{12}=\sigma_{21}=1)$.}
\label{SDR0Less1}
\end{figure}

\item In Fig ~\ref{I1Sigmas2} and ~\ref{I2Sigmas2}, we explore the cases where there is  symmetry ($\sigma_{ij}=\sigma_{ji}$) with $\sigma_{ij}=p_{ij}(0,0)$. We supposed in this case that Patch 2 has higher risk ($\beta_2=1.2$) and Patch 1 has lower risk ($\beta_1=0.3$). As can be intuitively deduced, the prevalence in Patch 1 is at its highest in the case of ``high mobility" ($\sigma_{12}=\sigma_{21}=1$), and  decreasing as $\sigma_{ij}$ decreases (with $i\neq j$). Conversely,  prevalence in Patch 2 is at its highest under very ``low mobility" ($\sigma_{12}=\sigma_{21}=0$) and decreases as $\sigma_{ij}$ increases. Note that $\sigma_{ij}$, with $i\neq j$ is proportional to $p_{ij}(I_1,I_2)$ which is the actual residence time.\\

\begin{figure}[ht]
\centering
 \subfigure[The level of prevalence in Patch 1 (low risk) seems to decrease as $\sigma_{12}$ and $\sigma_{21}$ decrease.]{
   \includegraphics[scale =.28] {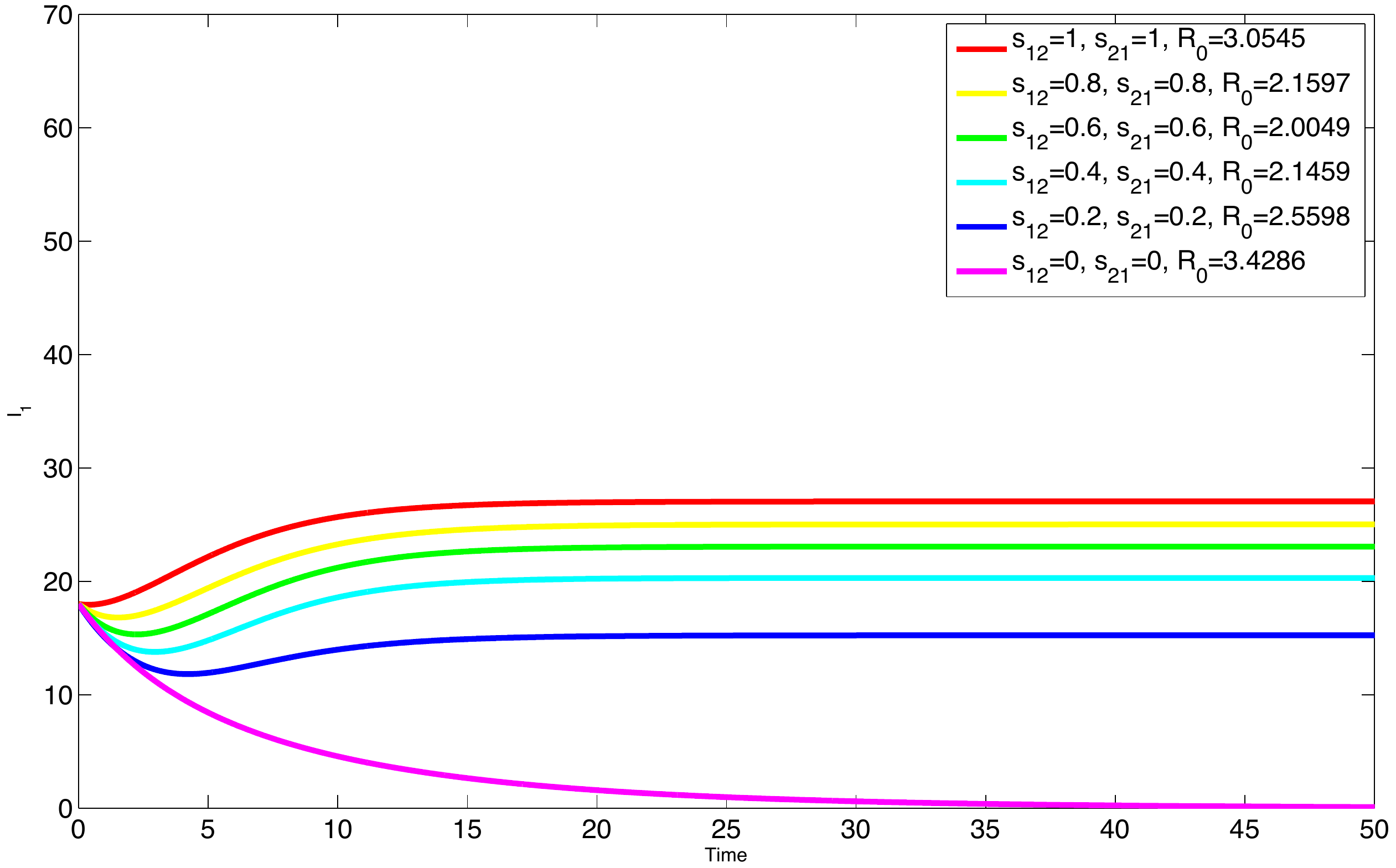}
\label{I1Sigmas2}}
 \subfigure[The level of prevalence in Patch 2 (high risk) seems to increase as $\sigma_{12}$ and $\sigma_{21}$ decrease.]{
   \includegraphics[scale =.28] {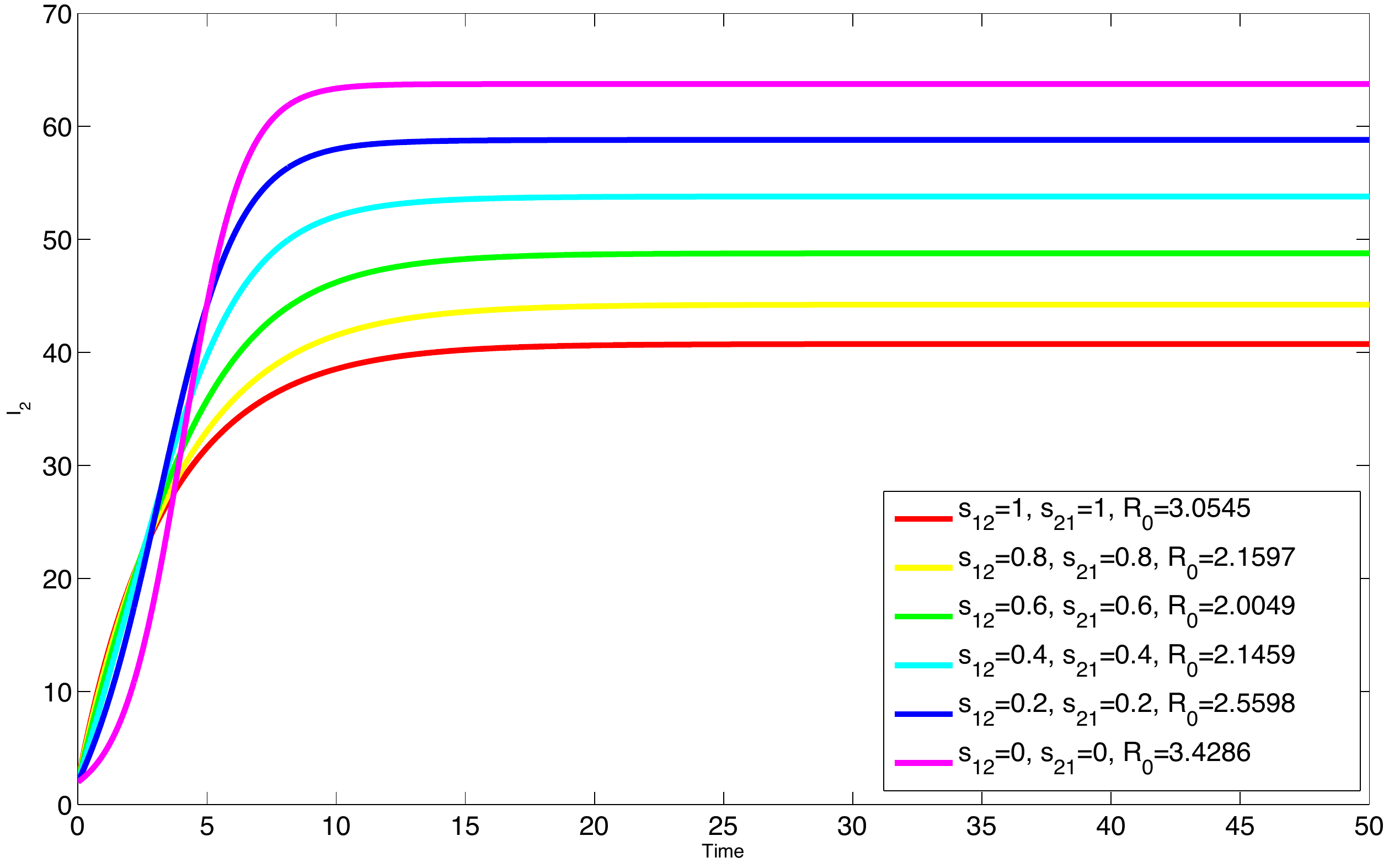}
\label{I2Sigmas2}}
\caption{Dynamics of $I_1$ and $I_2$ for varying $\sigma_{ij}$ for the state-dependent $p_{ij}(I_1,I_2)$ approach.} \label{fig:twofigs}
\end{figure}

\item We continue to explore the asymmetric case ($\sigma_{ij}\neq\sigma_{ji}$), that is, there is more mobility towards one patch. In Fig ~\ref{I1SigmasInverted22},  the prevalence in Patch 1 (low risk) is at its highest if there is ``high mobility" from Patch 1 to Patch 2 ($\sigma_{12}=1$) and no mobility from Patch 2 to Patch 1 ($\sigma_{21}=0$), the prevalence decreases along  with $\sigma_{12}$. If the programmed response of residents of Patch 1 is to reduce their mobility ($\sigma_{12}=0$) then, even if the mobility of residents in the high risk Patch 2 is extremely high ($\sigma_{21}=1$), still the prevalence in Patch 1 is at its lowest. Similar remarks hold for Fig ~\ref{I2SigmasInverted22} regarding the prevalence in Patch 2 (high risk) under different mobility schemes.  \\

\begin{figure}[ht!]
\centering
\subfigure[The level of prevalence in Patch 1 (low risk) seems to decrease as $\sigma_{12}$ and $\sigma_{21}$ decrease.]{
   \includegraphics[scale =.28] {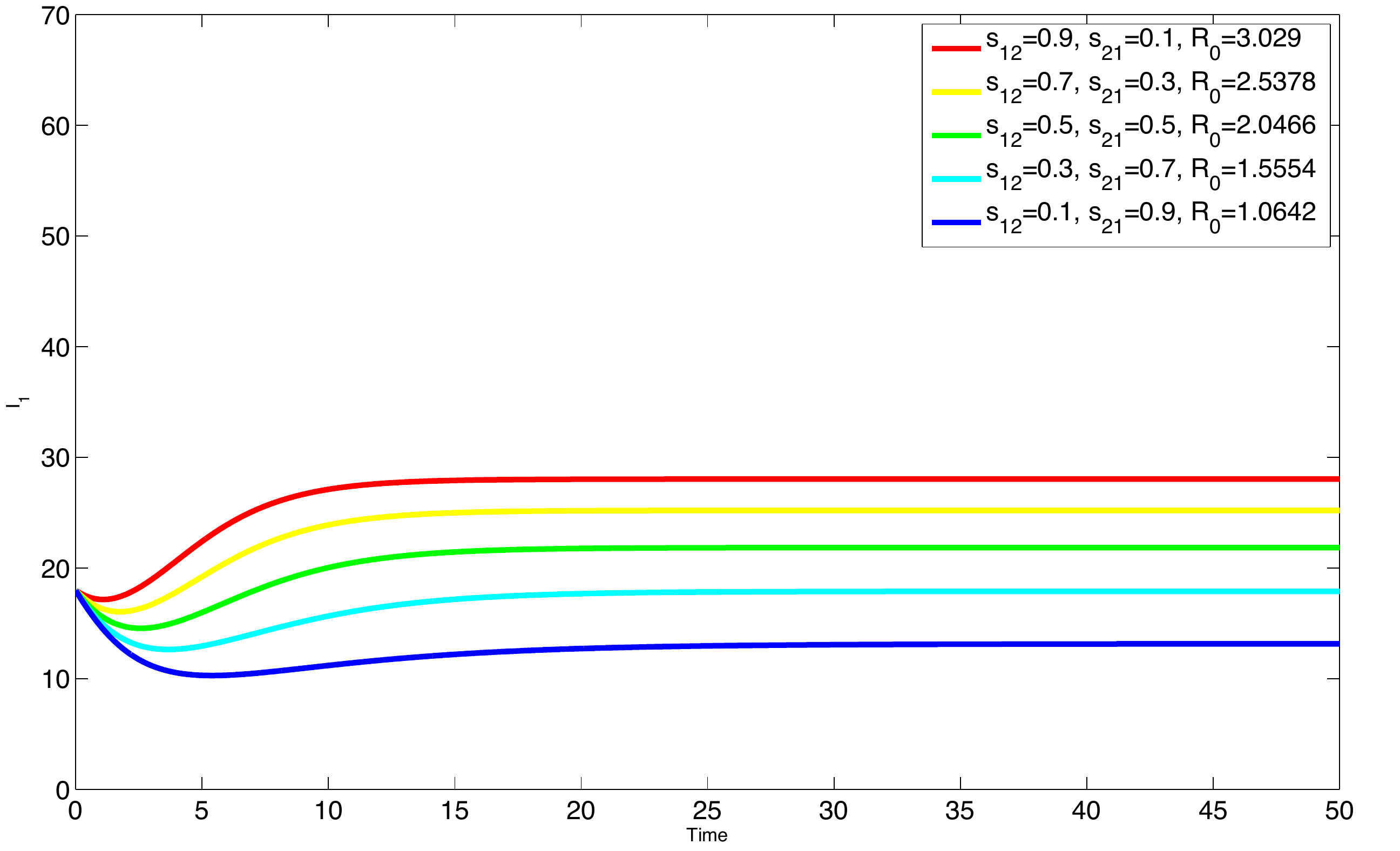}
\label{I1SigmasInverted22}}
  \subfigure[The level of prevalence in Patch 2 (high risk) seems to increase as $\sigma_{12}$ and $\sigma_{21}$ decrease.]{
   \includegraphics[scale =.28] {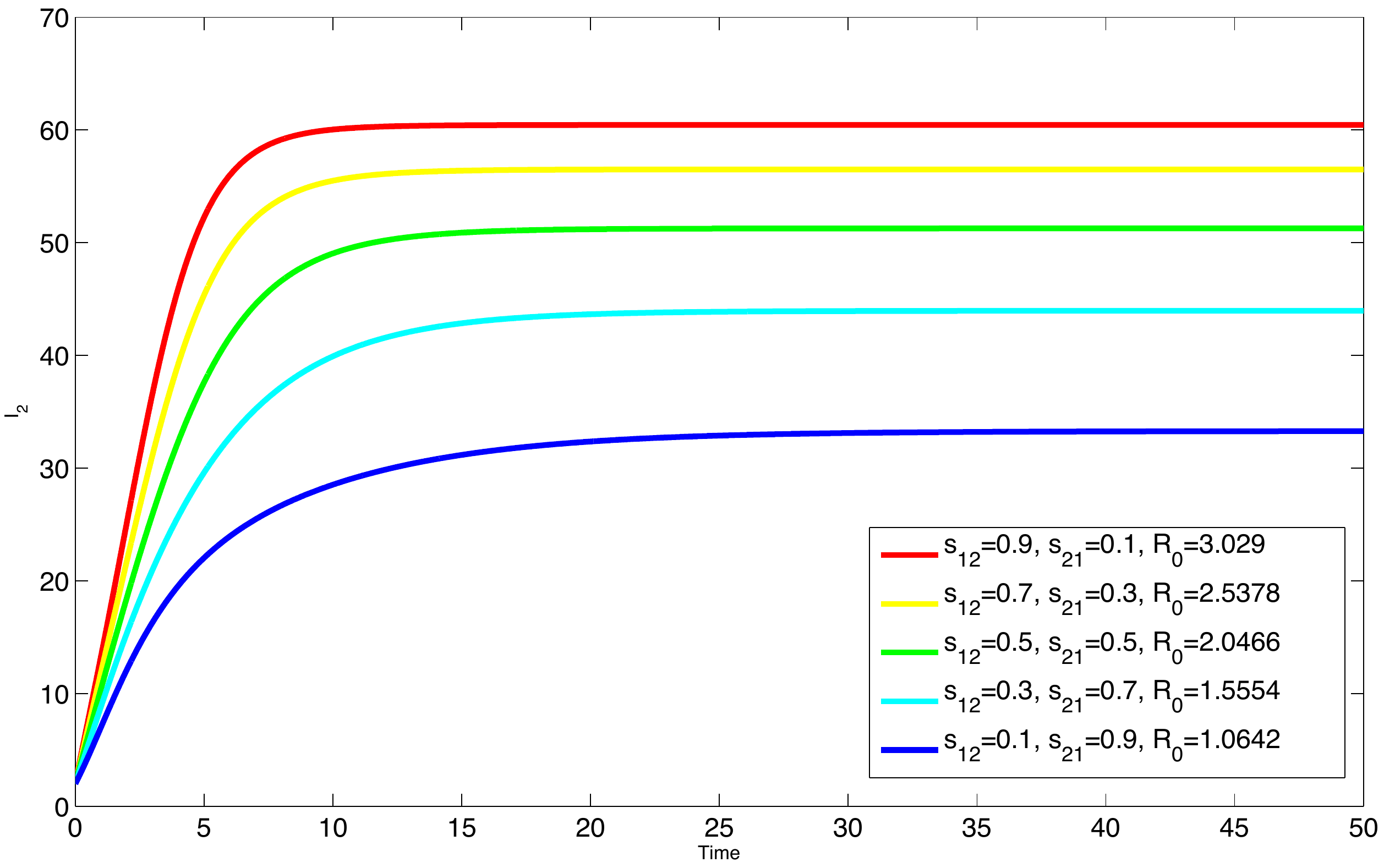}
\label{I2SigmasInverted22}}
   \caption{Dynamics of and $I_1$ and $I_2$ for varying $\sigma_{ij}$, but non-symmetric, for the state-dependent $p_{ij}(I_1,I_2)$.} \label{fig:twofigs}
\end{figure}

\begin{figure}[ht!]
\begin{center}
\includegraphics[scale=0.4]{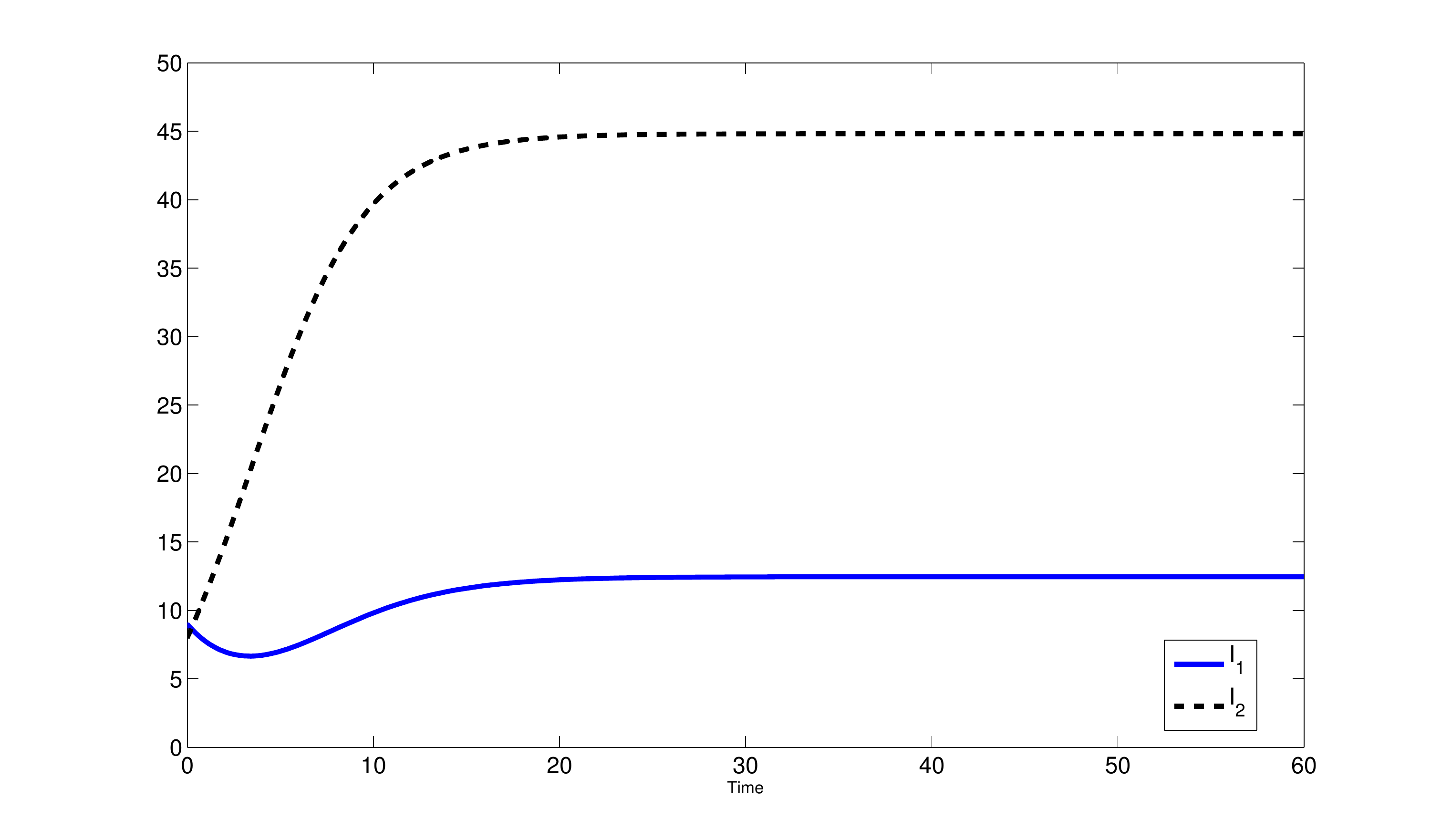}
\caption{\it Dynamics of $I_1$ and $I_2$ where $p_{12}=0$. In this case the residence time matrix $\mathbb P$ is not irreducible, the disease in Patch 2 persists nonetheless as predicted by the theorem \ref{MainTheo2}.}
\label{ReducibleI1I2} 
\end{center}
\end{figure}
\item Finally, Figure \ref{ReducibleI1I2} presents the dynamics of the infected in both patches for the (conventional) case where $p_{12}=0$ ( and $p_{11}=1$). This case is particularly interesting since the residence time matrix $\mathbb P$ is not irreducible (hence the hypothesis of Theorem \ref{MainTheo} fails) but ${\mathcal R}_0^2(\mathbb P)=1.8929>1$. As predicted by Theorem \ref{MainTheo2}, the disease in Patch 2 is persistent. Also, it worth noticing that in Fig \ref{ReducibleI1I2}, $I_1$ persists as well even though ${\mathcal R}_0^1(\mathbb P)=0.4455<1$, as the condition ${\mathcal R}_0^i(\mathbb P)>1$, for $i=1,2$, is sufficient but not necessary for persistence in Patch $i$.
\end{enumerate}

\section{Final epidemic size}\label{sec:FES}
Although the disease dynamics described here are not those of a controlled epidemiological system (the $\mathcal R_0$ is that corresponding to an uncontrolled system) they are still of considerable interest. The study of the role of residence time matrices on the dynamics of a single outbreak within a Susceptible-Infected-Recovered (with immunity) or SIR model without births and deaths is relevant to the development of public disease management measures \cite{BrauerFengCCC2010,ChowellCCCetAL2015,NancyFengCCC2013}. Under the parameters and definitions introduced earlier, and making use of the same notation, we arrive at the following system of nonlinear differential equations:
 \begin{equation} \label{SIROutbreat}
\left\{\begin{array}{ll}
\dot S_i=-\left(\frac{\beta_ip^2_{ii}}{p_{ii}N_i+p_{ji}N_j}+\frac{\beta_jp^2_{1ij}}{p_{ij}N_i+p_{jj}N_j}\right)S_iI_i    -  \left(\frac{\beta_ip_{ii}p_{ji}}{p_{ii}N_i+p_{ji}N_j}+\frac{\beta_jp_{ij}p_{jj}}{p_{ij}N_i+p_{jj}N_j}\right)S_iI_j,\\\\
\dot I_i=\left(\frac{\beta_ip^2_{ii}}{p_{ii}N_i+p_{ji}N_j}+\frac{\beta_jp^2_{1ij}}{p_{ij}N_i+p_{jj}N_j}\right)S_iI_i  + \left(\frac{\beta_ip_{ii}p_{ji}}{p_{ii}N_i+p_{ji}N_j}+\frac{\beta_jp_{ij}p_{jj}}{p_{ij}N_i+p_{jj}N_j}\right)S_iI_j            -\alpha_i I_i,\\\\
\dot R_i=\alpha_i I_i,
\end{array}\right.
\end{equation}where $R_i$ denotes the population of recovered immune individuals in Patch $i$, $\alpha_i$ is the recovery rate in Patch $i$ and $N_i\equiv S_i+I_i+R_i$, for $i=1,2$.\\

The basic reproduction number $\mathcal R_0$, is by definition the largest eigenvalue of $2\times2$ ($n\times n$ for the general case) next generation matrix,
{\small{
$$-FV^{-1}=
\left(\begin{array}{cc}
\left(\frac{\beta_1p^2_{11}}{p_{11}N_1+p_{21}N_2}+\frac{\beta_2p^2_{12}}{p_{12}N_1+p_{22}N_2}\right)\frac{N_1}{\alpha_1} & \left(\frac{\beta_1p_{11}p_{21}}{p_{11}N_1+p_{21}N_2}+\frac{\beta_2p_{12}p_{22}}{p_{12}N_1+p_{22}N_2}\right)\frac{N_1}{\alpha_2} \\
\left(\frac{\beta_1p_{11}p_{21}}{p_{11}N_1+p_{21}N_2}+\frac{\beta_2p_{12}p_{22}}{p_{12}N_1+p_{22}N_2}\right)\frac{N_2}{\alpha_1} & \left(\frac{\beta_1p^2_{21}}{p_{11}N_1+p_{21}N_2}+\frac{\beta_2p^2_{22}}{p_{12}N_1+p_{22}N_2}\right)\frac{N_2}{\alpha_2} 
\end{array}\right).
$$
}}

It has been shown (see \cite{Hethcote76}, for example) that not everybody gets infected during an outbreak, and so, estimating the size of the recovered population (the final epidemic size in the absence of deaths or departures) is tied in the solutions of the final size relationship, given in this case, by the system:
{
\small{
\begin{equation}\label{FSEMatrix}
\begin{bmatrix}
   \log\frac{S_1(0)}{S_1(\infty)}\\\\
\log\frac{S_2(0)}{S_2(\infty)} 
  \end{bmatrix}
=
\begin{bmatrix}
K_{11} & K_{12} \\
\\
K_{21}& K_{22}
 \end{bmatrix}
\begin{bmatrix}
1-\frac{S_1(\infty)}{N_1}\\\\
1-\frac{S_2(\infty)}{N_2} 
  \end{bmatrix}
\end{equation}
}}
where 

$$K=\begin{bmatrix}
\left(\frac{\beta_1p^2_{11}}{p_{11}N_1+p_{21}N_2}+\frac{\beta_2p^2_{12}}{p_{12}N_1+p_{22}N_2}\right)\frac{N_1}{\alpha_1} & \left(\frac{\beta_1p_{11}p_{21}}{p_{11}N_1+p_{21}N_2}+\frac{\beta_2p_{12}p_{22}}{p_{12}N_1+p_{22}N_2}\right)\frac{N_2}{\alpha_2} \\
\\
\left(\frac{\beta_1p_{11}p_{21}}{p_{11}N_1+p_{21}N_2}+\frac{\beta_2p_{12}p_{22}}{p_{12}N_1+p_{22}N_2}\right)\frac{N_1}{\alpha_1} & \left(\frac{\beta_1p^2_{21}}{p_{11}N_1+p_{21}N_2}+\frac{\beta_2p^2_{22}}{p_{12}N_1+p_{22}N_2}\right)\frac{N_2}{\alpha_2} 
 \end{bmatrix}.\vspace{5pt}$$

The relationship (\ref{FSEMatrix}) is obtained by using the fact that, in (\ref{SIROutbreat}), we have $\dot S_i+\dot I_i=-\alpha_i I_i\geq 0$. This implies that $\lim_{t\to\infty}I_i(t)=0$ (for $i=1, 2$), since $S_i$ and $I_i$ are positive and integrating $\frac{\dot S_i}{S_i}$ in (\ref{SIROutbreat}), we obtain, after some tedious algebra Expression (\ref{FSEMatrix}). The references \cite{Brauer2008,BrauerCCC2012} give more details on the computation of the final size relationship. \\

It is important to observe that the next generation matrix and the matrix $K$ defining the final epidemic size have the same eigenvalues. And so, the dominant eigenvalue, for both is $\mathcal R_0$ (although we note that we would not expect this to be the case in a controlled epidemiological system).\\

The residence time matrix $\mathbb P$ plays an important role as evidenced  by the dependence of the final epidemic size relation as in Fig \ref{FES}.  As we can notice in Fig \ref{FES}, the prevalence in low risk Patch 1 is highest in the high mobility scheme where as in high risk Patch 2, the high mobility leads to the lowest prevalence. Also, as stated before ($\displaystyle \lim_{t\to+\infty}I_i(t)=0$, for $i=1,\; 2$.) with any typical outbreak model, the disease ultimately dies out from both patches \cite{HethcoteSIAM2000}.

\begin{figure}[h!]
\includegraphics[scale=0.45]{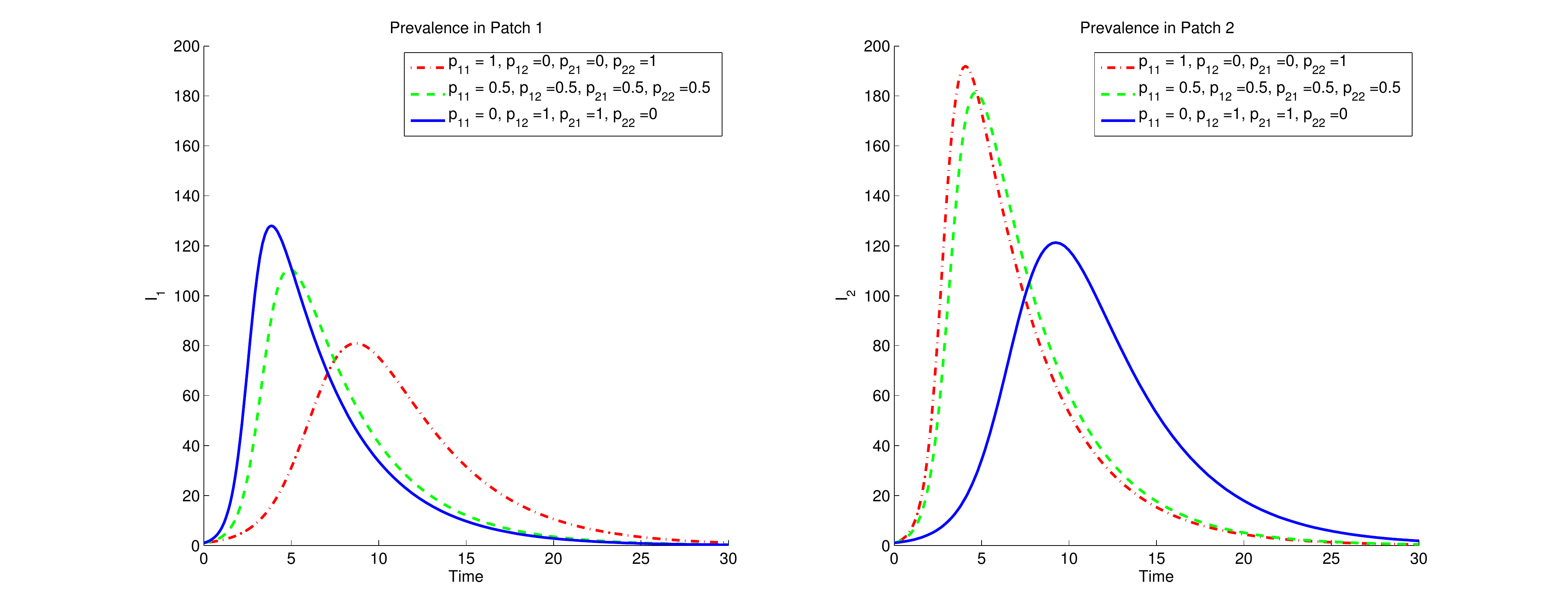}
\caption{ The prevalence in patch 1 (low risk) reaches its highest when in extreme mobility case (solid blue line) and is lowest when there is no mobility between the patches. The opposite of this scenario happens in patch 2 (high risk).}
\label{FES}
\end{figure}

%
\section{Conclusion and Discussions}\label{CCLDiscussions}
Heterogeneous mixing in multi-group epidemic models is most often defined in terms of group specific susceptibility and average contact rates captured multiplicatively by the transmission parameter $\beta$. However, contact rates, in general, cannot be measured in satisfactory ways for diseases like influenza, measles or tuberculosis, due to the difficulty of assessing the average number of contacts per unit of time of susceptible populations in different locations for varied activities. In this paper we propose the use of residence times in heterogeneous environments, as a proxy for ``effective" contacts over a certain time window; and develop a multi-group epidemic framework via virtual dispersal where the risk of infection is  a function of the residence time and local environmental risk. This novel approach eliminates the need to define and measure contact rates that are used in the traditional multi-group epidemic models with heterogeneous mixing.\\

Under the proposed framework, we formulate a general multi-patch $SIS$ epidemic model with residence times. We calculate the basic reproduction number $\mathcal R_0 $ which is a function of a patch residence-times matrix $\mathbb P$. Our global analysis shows that the model is robust in the sense that the disease dynamics depend exclusively on the basic reproductive number when the residence times matrix $\mathbb P$ is ``constant" (Theorem \ref{MainTheo}). We proved that the disease free equilibrium is globally asymptotically stable (GAS) if the basic reproduction number $\mathcal R_0\leq1$ and that a unique interior endemic GAS equilibrium exists if $\mathcal R_0>1$. This results holds as long as the residence time matrix $\mathbb P$ is irreducible, that is, the graph of the patches is strongly connected. \\

Our further analysis (Theorem \ref{MainTheo2}) provide  easily accessible insights on the impact of the residence matrix $\mathbb P$ on the levels of infection within each patch. Our results imply that  the infection risk (measured by $\mathcal B$) and the residence time matrix ($\mathbb P$) can play an important role in the endemic at the patch level. More specifically, the right combinations of the environmental risk level ($\mathcal B$) and dispersal behavior ($\mathbb P$) can either promote or suppress infection for particular patches. This work complements the results of Theorem \ref{MainTheo} regarding the robust dynamics under the assumption that $\mathbb P$ is strongly connected, i.e., irreducible. For example, when Theorem \ref{MainTheo2} is applied to the two patch case,  residents of Patch 1 visit  Patch 2 but not conversely.

These significant differences that emerges from the study of residence times models ($\mathbb P$ a ``constant") includes the possibility of studying disease dynamics in non strongly connected pacha configuration. In particular, we found conditions that allow us to characterize the patch-specific disease dynamics as a function of the time spend by residents and visitors to the patch of interest. This approach allowed us to classify patches as sources or sinks of infection, a role that depends on risk ($\mathcal B$) and mobility ($\mathbb P$).\\

 We also explored the case where the entries of residence times matrix $\mathbb P$ are no longer constant but rather prevalence dependent. We noticed that whenever the residence times are negatively correlated with risk then prevalence will be higher in the riskier patch but much lower than if the residence times were independent of health status.
 We ran carefully designed simulations to gain insights on the use of phenomenological modeling approach (System (\ref{2PNI3Variable})), since the mathematical analysis would be in general challenging. \\

Our proposed framework has been applied to the context of a two-patch $SIR$ single outbreak model with the concept of residence times. We derived the final epidemic size relationship in order to capture the size of the outbreak. Our results show that the residence time matrix $\mathbb P$ plays an important role which evidenced  by the dependence of the final epidemic size relation as in Fig \ref{FES}.\\

In both conventional and phenomenological approaches to residence times used in this paper, humans behavior and responses to disease risk are automatic: $\mathbb P$ is constant and predefined functions of health status. Recent studies \cite{EliCCC2011,Horan2007,Horan2011,Horan2010,Perrings:2014yq} have incorporated behavior as a feedback response coupled with the dynamics of the disease. A model of the decision to spend time in patch $i=1,2$ based on individuals' utility functions that include the possibility of adapting to changing contagion dynamics in the above two patch setting, using previous work \cite{EliCCC2011,MorinCCC2013}, is the subject of a separate study.
\section*{Acknowledgements}
These studies were made possible by grant \#1R01GM100471-01 from the National Institute of General Medical Sciences (NIGMS) at the National Institutes of Health. The contents of this manuscript are solely the responsibility of the authors and do not necessarily represent the official views of DHS or NIGMS. Research of Y.K. is partial supported by NSF-DMS (1313312). The funders had no role in study design, data collection and analysis, decision to publish, or preparation of the manuscript.
\appendix

\section{Computation of $\mathcal R_0$}\label{R0}
\begin{proof}
The general SIS model with residence time is described by the system (\ref{SIScompact})
$$
\dot I=\textrm{diag}(\bar N-I)\mathbb P\textrm{diag}(\mathcal B)\textrm{diag}(\tilde N)^{-1}\mathbb P^tI-\textrm{diag}(d_I+\gamma_I)I.
$$
The right hand member of the above system be can clearly decomposed as $\mathcal F+\mathcal V$ where
$$\mathcal F=\textrm{diag}(\bar N-I)\mathbb P\textrm{diag}(\mathcal B)\textrm{diag}(\tilde N)^{-1}\mathbb P^tI
 \quad\textrm{and}\quad \mathcal V=-\textrm{diag}(d_I+\gamma_I)I$$
 
 The jacobian at the DFE of $\mathcal F$ and $\mathcal V$ are giving by:
 
 $$F=D\mathcal F\bigg|_{DFE}=\textrm{diag}(\bar N)\mathbb P\textrm{diag}(\mathcal B)\textrm{diag}(\tilde N)^{-1}\mathbb P^t \quad\textrm{and}\quad V=\mathcal V\bigg|_{DFE}=-\textrm{diag}(d_I+\gamma_I)$$
 
The basic reproduction number $\mathcal R_0$ is given by the spectral radius of the next generation matrix $-FV^{-1}$  \cite{MR1057044,VddWat02}. Hence, we deduce that

$$\mathcal R_0=\rho(-\textrm{diag}(\bar N)\mathbb P\textrm{diag}(\mathcal B)\textrm{diag}(\tilde N)^{-1}\mathbb P^t V^{-1} )$$

For the two-patch SIS model (\ref{SIScompact}), $\mathcal R_0$ is the largest eigenvalue of the following matrix:
$$-FV^{-1}=\left(\begin{array}{cc}
\frac{b_1}{d_1(d_1+\gamma_1)}\left(\frac{\beta_1p^2_{11}}{p_{11}\frac{b_1}{d_1}+p_{21}\frac{b_2}{d_2}}+\frac{\beta_2p^2_{12}}{p_{12}\frac{b_1}{d_1}+p_{22}\frac{b_2}{d_2}}\right) & 
\frac{b_1}{d_1(d_2+\gamma_2)}\left(\frac{\beta_1p_{11}p_{21}}{p_{11}\frac{b_1}{d_1}+p_{21}\frac{b_2}{d_2}}+\frac{\beta_2p_{12}p_{22}}{p_{12}\frac{b_1}{d_1}+p_{22}\frac{b_2}{d_2}}\right)\\
\frac{b_2}{d_2(d_1+\gamma_1)} \left(\frac{\beta_1p_{11}p_{21}}{p_{11}\frac{b_1}{d_1}+p_{21}\frac{b_2}{d_2}}+\frac{\beta_2p_{12}p_{22}}{p_{12}\frac{b_1}{d_1}+p_{22}\frac{b_2}{d_2}}\right)& 
\frac{b_2}{d_2(d_2+\gamma_2)}\left(\frac{\beta_1p^2_{21}}{p_{11}\frac{b_1}{d_1}+p_{21}\frac{b_2}{d_2}}+\frac{\beta_2p^2_{22}}{p_{12}\frac{b_1}{d_1}+p_{22}\frac{b_2}{d_2}}\right)
\end{array}\right)$$
Let $$\heartsuit=\frac{b_1}{d_1(d_1+\gamma_1)}\left(\frac{\beta_1p^2_{11}}{p_{11}\frac{b_1}{d_1}+p_{21}\frac{b_2}{d_2}}+\frac{\beta_2p^2_{12}}{p_{12}\frac{b_1}{d_1}+p_{22}\frac{b_2}{d_2}}\right)$$
$$\diamondsuit=\frac{b_1}{d_1(d_2+\gamma_2)}\left(\frac{\beta_1p_{11}p_{21}}{p_{11}\frac{b_1}{d_1}+p_{21}\frac{b_2}{d_2}}+\frac{\beta_2p_{12}p_{22}}{p_{12}\frac{b_1}{d_1}+p_{22}\frac{b_2}{d_2}}\right)$$

$$\clubsuit=\frac{b_2}{d_2(d_1+\gamma_1)} \left(\frac{\beta_1p_{11}p_{21}}{p_{11}\frac{b_1}{d_1}+p_{21}\frac{b_2}{d_2}}+\frac{\beta_2p_{12}p_{22}}{p_{12}\frac{b_1}{d_1}+p_{22}\frac{b_2}{d_2}}\right)$$
and 

$$\spadesuit=\frac{b_2}{d_2(d_2+\gamma_2)}\left(\frac{\beta_1p^2_{21}}{p_{11}\frac{b_1}{d_1}+p_{21}\frac{b_2}{d_2}}+\frac{\beta_2p^2_{22}}{p_{12}\frac{b_1}{d_1}+p_{22}\frac{b_2}{d_2}}\right)$$
Then,

$$\mathcal R_0=\frac{1}{2}\left(\heartsuit+\spadesuit+\sqrt{(\heartsuit+\spadesuit)^2-4(\heartsuit\spadesuit-\diamondsuit\clubsuit)}\right)$$
With$$\clubsuit=\frac{b_2}{d_2(d_1+\gamma_1)} \frac{d_1(d_2+\gamma_2)}{b_1}\diamondsuit$$
\end{proof}
\section{Proof of Theorem \ref{MainTheo}}\label{AppProof}

The proof uses the method in \cite{IggidrSalletTsanou} which is based on Hirsch's theorem  \cite{Hirsch84}.
\begin{theorem}[Hirsch \cite{Hirsch84}]\hfill

Let $\dot x=F(x)$ be a cooperative differential equation for which $\mathbb R^n_+$ is invariant , the origin is an equilibrium, each $DF(x)$ is irreducible,  and that all orbits  are bounded. Suppose that $$x>y\implies DF(x)<DF(y)\quad\textrm{for all}\quad x,y.$$
Then all orbits in $\mathbb R^n_+$ tend to zero or there is a unique equilibrium $p^\ast$ in the interior of $\mathbb R^n_+$ and all orbits in $\mathbb R^n_+$ tend to $p^\ast$.
\end{theorem}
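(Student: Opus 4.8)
The plan is to prove this as a theorem about strongly monotone, strictly subhomogeneous (sublinear) flows, extracting the conclusion from the two structural features hidden in the hypotheses: strong monotonicity of $\phi_t$ and strict concavity of $F$ along rays through the origin. First I would establish strong monotonicity. Since the equation is cooperative, the off-diagonal entries of $DF$ are nonnegative, so the Kamke--M\"uller condition holds and the flow is monotone: $x\le y\Rightarrow\phi_t(x)\le\phi_t(y)$. Because each $DF(x)$ is irreducible, the variational equation along every orbit is irreducible, which upgrades this to strong monotonicity: for $x<y$ (meaning $x\le y$, $x\ne y$) one has $\phi_t(x)\ll\phi_t(y)$ for all $t>0$. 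Boundedness of orbits guarantees $\phi_t$ is complete for $t\ge0$ and that all $\omega$-limit sets are compact.

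Second, I would extract subhomogeneity from the Jacobian condition. Writing $F(x)=\int_0^1 DF(sx)x\,ds$ (using $F(0)=0$) and using $x>y\Rightarrow DF(x)<DF(y)$, I would show that for $x\gg0$ and $0<\lambda<1$ one has $F(\lambda x)\gg\lambda F(x)$, and dually $F(\lambda x)\ll\lambda F(x)$ for $\lambda>1$; equivalently $F$ is strictly concave along each ray, so $F(x)\ll DF(0)x$ for every $x>0$. These inequalities say precisely that $\lambda p$ is a strict subsolution for $\lambda<1$ and a strict supersolution for $\lambda>1$, and they drive the rest of the argument.

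Third comes the dichotomy. Let $s_0=s(DF(0))$ be the stability modulus of the irreducible cooperative matrix $DF(0)$, with positive Perron eigenvectors $v\gg0$ and $w^{\top}\gg0$. If $s_0\le0$, I would use the linear functional $V(x)=w^{\top}x$: along orbits $\dot V=w^{\top}F(x)<w^{\top}DF(0)x=s_0\,w^{\top}x\le0$ for $x>0$, forcing $V\downarrow0$ and hence every orbit to the origin. If $s_0>0$, then $\varepsilon v$ is a strict subsolution for small $\varepsilon$, so $\phi_t(\varepsilon v)$ increases in $t$ and, being bounded, converges up to a positive equilibrium $p^{\ast}\gg0$. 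Uniqueness of $p^{\ast}$ is where subhomogeneity is essential: given two positive equilibria $p,q$, setting $\lambda=\sup\{\mu:\mu p\le q\}$ and supposing $\lambda<1$, the strict inequality $F(\lambda p)\gg\lambda F(p)=0$ together with strong monotonicity contradicts the maximality of $\lambda$, forcing $p=q$.

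Finally I would prove convergence of every nonzero orbit to $p^{\ast}$ by squeezing. For any $x_0>0$, strong monotonicity gives $\phi_1(x_0)\gg0$, so there exist $0<\alpha<1<\beta$ with $\alpha p^{\ast}\le\phi_1(x_0)\le\beta p^{\ast}$; by subhomogeneity $\phi_t(\alpha p^{\ast})\uparrow p^{\ast}$ and $\phi_t(\beta p^{\ast})\downarrow p^{\ast}$, and monotonicity sandwiches $\phi_t(x_0)$ between them. The hard part will be the two monotone-iteration claims --- that $\phi_t(\alpha p^{\ast})$ and $\phi_t(\beta p^{\ast})$ are genuinely monotone in $t$ and converge to $p^{\ast}$ rather than to some other equilibrium; their monotonicity follows from the sub/supersolution property combined with the semigroup comparison $\phi_{t+s}\ge\phi_t$ (resp.\ $\le$), while identifying the limits with $p^{\ast}$ relies on the uniqueness above. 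Handling the boundary case $s_0=0$, where the linear comparison alone does not decay and one must lean on the \emph{strict} concavity in $V$, is the most delicate point.
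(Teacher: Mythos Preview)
The paper does not prove this theorem. It is quoted in Appendix~\ref{AppProof} as a black-box tool attributed to Hirsch~\cite{Hirsch84} and then immediately applied to establish Theorem~\ref{MainTheo}; no argument for the result itself appears anywhere in the paper. There is therefore nothing in the paper to compare your proposal against.

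For what it is worth, your sketch follows the standard route to results of this type: strong monotonicity from cooperativity plus irreducibility of $DF$; strict sublinearity extracted from the Jacobian hypothesis via the representation $F(x)=\int_0^1 DF(sx)\,x\,ds$; a dichotomy governed by the stability modulus $s(DF(0))$; and a Krasnoselskii-type squeezing between $\phi_t(\alpha p^\ast)$ and $\phi_t(\beta p^\ast)$ for global convergence. The outline is sound. One technical point is understated, however: in Hirsch's ordering the hypothesis $x>y\Rightarrow DF(x)<DF(y)$ means only $DF(x)\le DF(y)$ entrywise with $DF(x)\neq DF(y)$, not that every entry is strictly smaller. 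Your strict claims $F(x)\ll DF(0)x$ and $F(\lambda x)\gg\lambda F(x)$ therefore do not drop out of the integral formula directly; the needed strictness must be recovered through irreducibility together with the fact that strong monotonicity sends any $x>0$ into the open cone, $\phi_t(x)\gg0$ for $t>0$. You were right to flag the borderline case $s(DF(0))=0$ as the delicate one; there the linear comparison gives only $\dot V\le0$ and one must invoke LaSalle's principle plus the strict concavity to conclude.
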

\begin{proof}[Proof of Theorem \ref{MainTheo}] \hfill

Equation (\ref{SIScompact}) can be written as:
\begin{equation}\label{SIScompactF}\dot I=(F+V)I-\textrm{diag}(I)\mathbb P\textrm{diag}(\mathcal B)\textrm{diag}(\tilde N)^{-1}\mathbb P^tI\end{equation}
where $F=\text{diag}(\bar N)\mathbb P\text{diag}(\mathcal B)\text{diag}(\tilde N)^{-1}\mathbb P^t$ and $V=-\text{diag}(d_I+\gamma_I)$, as defined in Appendix \ref{R0}. Let us denote by $X(I)$ the semi flow induced by (\ref{SIScompactF}). Hence
\begin{equation}\label{DX}
DX(I)=\textrm{diag}(\bar N-I) \mathbb P\textrm{diag}(\mathcal B)\textrm{diag}(\tilde N)^{-1}\mathbb P^t+V-W(I_1,I_2)
\end{equation}
where $W(I_1,I_2)=\textrm{diag}(\mathbb P\textrm{diag}(\mathcal B)\textrm{diag}(\tilde N)^{-1}\mathbb P^t I)$. Since $\mathbb P$ is irreducible and $I\leq \bar N$, $DX(I)$ is clearly Metzler irreducible matrix. That means, the flow is strongly monotone. Plus, $DX(I)$ is clearly decreasing with respect of $I$. Hence, by Hirsch's theorem either all trajectories go to zero or go to an equilibrium point $\bar I\gg0$.
From the relation (\ref{DX}), we have $DX(0)=F+V$ where $F$ and $V$ are the one defined previously in Appendix \ref{R0}.  However, since $F$ a nonnegative matrix and $V$ is Metzler, we have the following equivalence

$$\alpha(F+V)<0\iff \rho(-FV^{-1})<1$$

where $\alpha(F+V)$ is the stability modulus, i.e: the largest real part of eigenvalues, of $F+V$ and $\rho(-FV^{-1})$ the spectral radius of $-FV^{-1}$. Hence, the DFE is globally asymptotically stable if $\mathcal R_0=\rho(-FV^{-1})<1$. And if $\mathcal R_0>1$, i.e: $\alpha(F+V)>0$, the DFE is unstable \cite{VddWat02}. Since, we have proved that $DX(I)$ is a Metzler matrix,  to prove the local stability of the endemic equilibrium $\bar I\gg0$, we only need to prove that it exists $w\gg0$ such that $DX(\bar I)w<0$ \cite{MR1298430}. The endemic equilibrium $\bar I\gg0$ satisfies the equation
$$(F+V)\bar I-\textrm{diag}(\bar I)\mathbb P\textrm{diag}(\mathcal B)\textrm{diag}(\tilde N)^{-1}\mathbb P^t\bar I=0$$
Hence,
$$DX(\bar I)\bar I=-W(\bar I)\bar I<0$$

Hence, with $w=\bar I$, we deduce that $\bar I$ is locally stable. With the attractivity of $\bar I$ guaranteed Hirsh's theorem, we conclude that the endemic equilibrium $\bar I\gg0$ is globally asymptotically stable if $\mathcal R_0>1$.

Finally, if $\mathcal R_0=1$, we have $\alpha(F+V)=0$. It exists $c\gg0$ such that $(F+V)^tc=0$. By considering the Lyapunov function $V=\left\langle c |I\right\rangle$. This function is definite positive and its derivation along the trajectories if (\ref{SIScompactF}) is 
\begin{eqnarray}
\dot V&=&\left\langle c |\dot I\right\rangle\nonumber\\
&=&\left\langle c |(F+V)I-\textrm{diag}(I)\mathbb P\textrm{diag}(\mathcal B)\textrm{diag}(\tilde N)^{-1}\mathbb P^tI\right\rangle\nonumber\\
&=&-\left\langle c |\textrm{diag}(I)\mathbb P\textrm{diag}(\mathcal B)\textrm{diag}(\tilde N)^{-1}\mathbb P^tI\right\rangle\nonumber\\
&\leq&0
\end{eqnarray}
Plus $\dot V=0$ only at the DFE. Hence the DFE is GAS if $\mathcal R_0=1$.  This completes the proof of the theorem \ref{MainTheo}.
\end{proof}

\section{Proof of Theorem \ref{MainTheo2}}\label{AppProof2}
\begin{proof}Since Model (\ref{SIScompact}) has the compact global attractor $\Omega$, then according to Theorem \eqref{MainTheo}, we can expect that $\lim_{t\rightarrow\infty} I_i(t)<\frac{b_i}{d_i}$, thus for time large enough, we can have $\frac{b_i}{d_i}-I_i>0$, therefore we have
$$\begin{array}{lcl}
\dot I_i&>&I_i\left(\frac{b_i}{d_i}-I_i\right)\left(  \sum_{j=1}^{n}\frac{\beta_jp_{ij}^2}{\sum_{k=1}^{n}p_{kj}\frac{b_k}{d_k}}\right) -(d_i+\gamma_i )I_i 
\end{array}$$which indicates follows when $\mathcal R_0^i(\mathbb P)>1 $
$$ \begin{array}{lcl}
\frac{\dot I_i}{I_i}\big\vert_{I_i=0}&=&\frac{b_i}{d_i}\left(  \sum_{j=1}^{n}\frac{\beta_jp_{ij}^2}{\sum_{k=1}^{n}p_{kj}\frac{b_k}{d_k}}\right) -(d_i+\gamma_i )>0 
\end{array}.$$
Then apply the average Lyapunov Theorem \cite{Hutson84}, we can conclude that $\liminf_{t\rightarrow\infty} I_i(t)>0$, i.e., the disease in the residence Patch $i$ is persistent if $\mathcal R_0^i(\mathbb P)>1 $ . \\

If $p_{ij}>0$ and $p_{kj}=0$ for all $k=1,..,n, \mbox{ and } k\neq i$, this implies that if there is a portion of the residence Patch $i$ population flowing into the residence Patch $j$, then there is no other residence Patch $k$ where $k\neq j$, i.e., 
$$\beta_jp_{ij}\sum_{k=1,k\neq i}^{n}p_{kj}I_k=0$$ which also implies that 
$$\left(\frac{b_i}{d_i}-I_i\right)\sum_{j=1}^{n}\frac{\beta_jp_{ij}\sum_{k=1,k\neq i}^{n}p_{kj}I_k}{\sum_{k=1}^{n}p_{kj}\frac{b_k}{d_k}} =0.$$
then we can conclude that  Model  (\ref{SIScompact}) can have an equilibrium  since under these conditions,
$$\frac{b_i}{d_i}\sum_{j=1}^{n}\frac{\beta_jp_{ij}\sum_{k=1,k\neq i}^{n}p_{kj}I_k}{\sum_{k=1}^{n}p_{kj}\frac{b_k}{d_k}}=\frac{b_i}{d_i}\frac{\beta_i \sum_{k=1,k\neq i}^{n}p_{ki}I_k}{\sum_{k=1}^{n}p_{kj}\frac{b_k}{d_k}} =0.$$ Therefore, if the conditions $p_{kj}=0$ for all $k=1,..,n, \mbox{ and } k\neq j$ whenever $p_{ij}>0$ hold, then we have 
$$\begin{array}{lcl}
\dot I_i\vert_{I_i=0}&=&\left[I_i\left(\frac{b_i}{d_i}-I_i\right)\left(  \sum_{j=1}^{n}\frac{\beta_jp_{ij}^2}{\sum_{k=1}^{n}p_{kj}\frac{b_k}{d_k}}\right)+\left(\frac{b_i}{d_i}-I_i\right)\sum_{j=1}^{n}\frac{\beta_jp_{ij}\sum_{k=1,k\neq i}^{n}p_{kj}I_k}{\sum_{k=1}^{n}p_{kj}\frac{b_k}{d_k}}  -(d_i+\gamma_i )I_i \right]\bigg\vert_{I_i=0} \\
&=&\frac{b_i}{d_i}\sum_{j=1}^{n}\frac{\beta_jp_{ij}\sum_{k=1,k\neq i}^{n}p_{kj}I_k}{\sum_{k=1}^{n}p_{kj}\frac{b_k}{d_k}}=0
\end{array}.$$Therefore, $I_i=0$ is the invariant manifold for Model (\ref{SIScompact}).

On the other hand, when these conditions hold, then we have 
$$\mathcal R_0^i(\mathbb P)=R_0^i \times \sum_{j=1}^n\left(\frac{\beta_j}{\beta_i}\right) p_{ij}\left( \frac{\left(p_{ij} \frac{b_i}{d_i}\right)}{\sum_{k=1}^{n}p_{kj}\frac{b_k}{d_k}}\right)=R_0^i \times \sum_{j=1}^n \left(\frac{\beta_j}{\beta_i}\right) p_{ij}.$$ Therefore, if 
$\mathcal R_0^i(\mathbb P)=R_0^i \times\sum_{j=1}^n \left(\frac{\beta_j}{\beta_i}\right) p_{ij}<1$, then we have the following inequality:
$$ \begin{array}{lcl}
\frac{\dot I_i}{I_i}&=&I_i\left(\frac{b_i}{d_i}-I_i\right)\left(  \sum_{j=1}^{n}\frac{\beta_jp_{ij}^2}{\sum_{k=1}^{n}p_{ki}\frac{b_k}{d_k}}\right)-(d_i+\gamma_i )I_i\\
&\leq& I_i \left[\frac{b_i}{d_i}\left(  \sum_{j=1}^{n}\frac{\beta_jp_{ij}^2}{\sum_{k=1}^{n}p_{ki}\frac{b_k}{d_k}}\right)-(d_i+\gamma_i )\right]\\
&=&I_i \left[\sum_{j=1}^{n}\beta_jp_{ij}-(d_i+\gamma_i )\right]<0
\end{array}.$$
 Therefore, we have $\lim_{t\rightarrow\infty} I_i(t)=0$, i.e., there is no endemic in the residence Patch $i$.
 \end{proof}


\end{document}